\newcommand{\tit}[1]{\text{\itshape{#1}}}
\newcommand{\fshift}{\tit{shift}\,} 
\newcommand{\fcost}{\xi}   
\newcommand{\diffdaa}[2]{\tit{DiffDAA}(#1,#2)}
\newcommand{\shift}[2]{\fshift^{#1}({#2})}        
\newcommand{\cost}[2]{\fcost^{#1}({#2})}             
\newcommand{\rightpos}[2]{\tit{right}\,^{#1}({#2})} 
\newcommand{\ashift}{\texttt{shift}} 
\newcommand{\daaval}{\tit{value}}
\newcommand{\rev}[1]{\stackrel{\leftarrow}{{#1}}}
\newcommand{\suff}[1]{{#1}'} 
\newcommand{\setN}{\mathbbmss{N}}
\newcommand{\setZ}{\mathbbmss{Z}}
\newcommand{\Prob}{\mathbbmss{P}}
\newcommand{\prob}{\Prob}
\newcommand{\len}[1]{{\vert {#1} \vert}}
\newcommand{\chr}[2]{#1[#2]}
\newcommand{\prefix}[2]{#1[..#2]}
\newcommand{\substr}[3]{{#1}[{#2}\ldots{#3}]}
\newcommand{\dist}{\mathcal{L}}
\newcommand{\paa}{\mathcal{P}}
\newcommand{\daa}{\mathcal{D}}
\newcommand{\emptystring}{\varepsilon}
\newcommand{\iverl}{\llbracket}
\newcommand{\iverr}{\rrbracket}
\newcommand{\stateset}{\mathcal{Q}}
\newcommand{\emiset}{\mathcal{E}}
\newcommand{\valset}{\mathcal{V}}
\newcommand{\valsize}{\vartheta}
\newcommand{\stateproc}{Q}
\newcommand{\emiproc}{E}
\newcommand{\valproc}{V}
\newcommand{\state}{q}
\newcommand{\emi}{e}
\newcommand{\val}{v}
\newcommand{\op}{\theta}
\newcommand{\emidist}{\mu}
\newcommand{\paatrans}{T}
\newcommand{\daaemi}{\eta}
\newcommand{\totalsteps}{n}
\newcommand{\paatuple}{\big(\stateset,\state_0,\paatrans,\valset,\val_0, \emiset, \emidist=(\emidist_\state)_{\state\in\stateset},\op=(\op_\state)_{\state\in\stateset}\big)}
\theoremstyle{definition}
\newtheorem{definition}{Definition}
\newtheorem{lemma}{Lemma}
\newtheorem{theorem}{Theorem}
\title{Exact Analysis of Pattern Matching Algorithms with Probabilistic Arithmetic Automata}
\author{Tobias Marschall and Sven Rahmann\\
\small Bioinformatics for High-Throughput Technologies,\\[-0.8ex]
\small Computer Science XI, TU Dortmund, Germany\\[-0.8ex]
\small \texttt{firstname.lastname@tu-dortmund.de}
}
\date{}
\begin{document}
\maketitle
\begin{abstract}
\noindent
We propose a framework for the exact probabilistic analysis of window-based pattern matching algorithms, such as Boyer-Moore, Horspool, Backward DAWG Matching, Backward Oracle Matching, and more.
In particular, we show how to efficiently obtain the distribution of such an algorithm's running time cost for any given pattern in a random text model, which can be quite general, from simple uniform models to higher-order Markov models or hidden Markov models (HMMs). 
Furthermore, we provide a technique to compute the exact distribution of \emph{differences} in running time cost of two algorithms.
In contrast to previous work, our approach is neither limited to simple text models, nor to asymptotic statements, nor to moment computations such as expectation and variance.
Methodically, we use extensions of finite automata which we call \emph{deterministic arithmetic automata} (DAAs) and \emph{probabilistic arithmetic automata} (PAAs)~\cite{Marschall2008}.
To our knowledge, this is the first time that substring- or suffix-based pattern matching algorithms are analyzed exactly.
Experimentally, we compare Horspool's algorithm, Backward DAWG Matching, and Backward Oracle Matching on prototypical patterns of short length and provide statistics on the size of minimal DAAs for these computations.
\end{abstract}


\section{Introduction}
\label{sec:intro}

The basic pattern matching problem is to find all occurrences of a \emph{pattern} string in a (long) \emph{text} string, with few character accesses.
Let~$\ell$ be the text length and~$m$ be the pattern length.
The well-known Knuth-Morris-Pratt algorithm~\cite{KnuthMorrisPratt1977} reads each text character exactly once from left to right and hence needs exactly $\ell$ character accesses for any text of length $\ell$, after preprocessing the pattern in~$\Theta(m)$ time.
In contrast, the Boyer-Moore~\cite{Boyer1977}, Horspool~\cite{Horspool1980}, Sunday~\cite{Sunday1990}, Backward DAWG Matching (BDM,~\cite{Crochemore1994}) and Backward Oracle Matching (BOM,~\cite{Allauzen2001}) algorithms move a length-$m$ search window across the text and first compare its \emph{last} character to the last character of the pattern.
This often allows to move the search window by more than one position (at best, by~$m$ positions if the last window character does not occur in the pattern at all), for a best case of $n/m$, but a worst case of $mn$ character accesses. 
The worst case can often be improved to~$\Theta(m+n)$, but this makes the code more complicated and seldom provides a speed-up in practice.
For practical pattern matching applications, the most important algorithms are
Horspool, BDM (often implemented as Backward Nondeterministic DAWG Matching, BNDM, via a non-deterministic automaton that is simulated in a bit-parallel fashion), and BOM, depending on alphabet size, text length and pattern length; see~\cite[Section~2.5]{Navarro2002} for an experimental map.

A question that has apparently so far not been investigated is about the exact probability distribution of the number of required character accesses~$X^p_\ell$ when matching a given pattern~$p$ against a random text of finite length~$\ell$ (non-asymptotic case), even though related questions have been answered in the literature.
For example, \cite{Baeza-Yates1990,BaezaYatesRegnier1992} analyze the expected value of~$X^p_\ell$ for the Horspool algorithm.
In~\cite{Mahmoud1997} it is further shown that for the Horspool algorithm, $X^p_\ell$ is asymptotically normally distributed for i.i.d.\ texts, and~\cite{Smythe2001} extends this result to Markovian text models. In \cite{Tsai2006}, a method to compute mean and variance of these distributions is given.

In contrast to these results that are special to the Horspool algorithm, we use a general framework called \emph{probabilistic arithmetic automata} (PAAs), introduced at CPM'08~\cite{Marschall2008}, to compute the exact distribution of~$X^p_\ell$ for any window-based pattern matching algorithm. In~\cite{Marschall2008}, PAAs where introduced in order to compute the distribution of occurrence counts of patterns; the fact that they can also be used to analyze pattern matching algorithms further highlights their utility.
In our framework, the random text model can be quite general, from simple i.i.d.\ uniform models to high-order Markov models or HMMs.
The approach is applied exemplarily to the following pattern matching algorithms in the non-asymptotic regime (short patterns, medium-length texts): Horspool, B(N)DM, BOM. 
We do not treat BDM and BNDM separately as, in terms of text character accesses, they are indistinguishable (see Section~\ref{sec:bdm}).

This paper is organized as follows.
In the next section, we introduce notation and give a brief review of the Horspool, B(N)DM and BOM algorithms.
In Section~\ref{sec:daa}, we define \emph{deterministic arithmetic automata} (DAAs).
In Section~\ref{sec:construction}, we present a simple general DAA construction for the analysis of window-based pattern matching algorithms. 
We also show that the state space of the DAA can be considerably reduced by adapting DFA minimization to DAAs.
In Section~\ref{sec:paa}, we summarize the PAA framework with its generic algorithms, define finite-memory text models and connect DAAs to PAAs.
This yields, for each given pattern~$p$, algorithm, and random text model, a PAA that computes the distribution of~$X^p_\ell$ for any finite text length $\ell$.
Section~\ref{sec:comparison} introduces \emph{difference DAAs} by a product construction that allows to compare two algorithms on a given pattern.
Exemplary results on the comparison of several algorithms can be found in Section~\ref{sec:case_studies}. 
There, we also provide statistics on automata sizes for different algorithms and pattern lengths. 
Section~\ref{sec:discussion} contains a concluding discussion.

An extended abstract of this work has been presented at LATA'10~\cite{Marschall2010} with two alternative DAA constructions. 
In contrast to that version, the DAA construction in the present paper can be seen as a combination of both of those, and is much simpler.
Additionally, the DAA minimization introduced in the present paper allows the analysis of much longer patterns in practice. While~\cite{Marschall2010} was focused on Horspool's and Sunday's algorithms, here, we give a general construction scheme applicable to any window-based pattern matching algorithm and discuss the most relevant algorithms, namely Horspool, BOM, and B(N)DM, as examples.


\section{Algorithms}
\label{sec:algo}

Both pattern and text are over a finite alphabet~$\Sigma$.
Indexing generally starts at zero.
The pattern~$p=\chr{p}{0}\dots \chr{p}{m-1}$ is of length~$m$;
a (concrete) text~$s$ is of length~$\ell$.
By $\rev{p}$, we denote the reverse pattern $\chr{p}{m-1}\dots \chr{p}{0}$.

In the following, we summarize the Horspool, B(N)DM and BOM algorithms;
algorithmic details can be found in~\cite[Chapter 2]{Navarro2002}.

We do not discuss the Knuth-Morris-Pratt algorithm because its number of text character accesses is constant: Each character of the text is looked at exactly once.
Therefore, $\dist(X^p_\ell)$ is the Dirac distribution on $\ell$, i.e., $\Prob(X^p_\ell=\ell)=1$.

We also do not discuss the Boyer-Moore algorithm, since it is never the best one in practice because of its complicated code to achieve optimal asymptotic running time.
In contrast to our earlier paper~\cite{Marschall2010}, we also skip the Sunday algorithm, as it is almost always inferior to Horspool's. Instead, we focus on those algorithms that are fastest in practice according to \cite[Fig.~2.22]{Navarro2002}.

The Horspool, B(N)DM and BOM algorithms have the following properties in common:
They maintain a search window~$w$ of length~$m=\len{p}$ that initially starts at position~$0$ in the text~$s$, such that its rightmost character is at position~$t=m-1$. 
The right window position~$t$ grows in the course of the algorithm; we always have~$w = \substr{s}{(t-m+1)}{t}$.
The algorithms look at the characters in each window from right to left, and thus compare the reversed window with the reversed pattern $\rev{p}$.
(For Horspool, variants with different comparison orders are possible, but the rightmost character is always compared first.)

The two properties of an algorithm that influence our analysis are the following:
For a pattern $p\in\Sigma^m$, each window $w\in\Sigma^m$ determines
\begin{enumerate}
\item its cost $\cost{p}{w}$, e.g., the number of text character accesses required to analyze this window,
\item its shift $\shift{p}{w}$, which is the number of characters the window is advanced after it has been examined.
\end{enumerate}

\subsection{Horspool}
First, the rightmost characters of window and pattern are compared; that means, $a := \chr{w}{m-1}=\chr{s}{t}$ is compared with~$\chr{p}{m-1}$. 
If they match, the remaining~$m-1$ characters are compared until either the first mismatch is found or an entire match has been verified.
This comparison can happen right-to-left, left-to-right, or in an arbitrary order that may depend on~$p$. 
In our analysis, we focus on the right-to-left case for concreteness, but the modifications for the other cases are straightforward.
Therefore, the cost of window~$w$ is
\[
   \cost{p}{w} = \begin{cases}
   m &\text{ if } p=w,\\
   \min \{i: 1\leq i\leq m,\; \chr{p}{m-i} \neq \chr{w}{m-i} \} &\text{ otherwise}.
   \end{cases}
\]

In any case, the rightmost window character~$a$ is used to determine how far the window can be shifted for the next iteration.
The shift-function ensures that no match can be missed by moving the window such that~$a$ becomes aligned to the rightmost~$a$ in~$p$ (not considering the last position). 
If~$a$ does not occur in~$p$ (or only at the last position), it is safe to shift by~$m$ positions. Formally, we define
\begin{align*}
   \rightpos{p}{a} &:= \max\big[ \{i\in\{0,\dots,m-2\}: \chr{p}{i}=a\} \cup \{-1\} \big]\,,\\
   \ashift[a]      &:= (m-1) - \rightpos{p}{a} \,, \text{ assuming $p$ fixed,}\\
   \shift{p}{w}    &:= \ashift[\chr{w}{m-1}] \,.
\end{align*}

For concreteness, we state Horspool's algorithm and how we count text character accesses as pseudocode in Algorithm~\ref{alg:horspool}. 
Note that after a shift, even when we know that~$a$ now matches its corresponding pattern character, the corresponding position is compared again and counts as a text access. 
Otherwise the additional bookkeeping would make the algorithm more complicated; this is not worth the effort in practice. 
The lookup in the~$\ashift$-table does not count as an additional access, since we can remember~$\ashift[a]$ as soon as the last window character has been read.

\begin{algorithm}[t!]
\caption{\textsc{Horspool-with-Cost}}\label{alg:horspool}
\begin{algorithmic}[1]
\vspace*{.1cm}
\REQUIRE text $s\in\Sigma^\ast$, pattern $p\in\Sigma^m$
\ENSURE pair (number $occ$ of occurrences of $p$ in $s$, number $cost$ of accesses to $s$)
  \STATE pre-compute table $\ashift[a]$ for all $a\in\Sigma$ 
	\STATE $(occ,cost) \gets (0,0)$
 	\STATE $t \gets m-1$
	\WHILE{$t < \len{s}$}\label{alg:hor_mainloop_begin}
		\STATE $i\gets 0$
		\WHILE{$i<m$}\label{alg:hor_innerloop_begin}
			\STATE $cost\gets cost+1$
			\STATE \textbf{if} $\chr{s}{t-i}\neq\chr{p}{(m-1)-i}$ \textbf{then break}
			\STATE $i\gets i+1$
		\ENDWHILE\label{alg:hor_innerloop_end}
		\STATE \textbf{if} $i=m$ \textbf{then} $occ\gets occ+1$
		\STATE $t\gets t+\ashift[\chr{s}{t}]$
	\ENDWHILE\label{alg:hor_mainloop_end}
	\RETURN $(occ, cost)$\label{alg:hor_n_alteration}
\end{algorithmic}
\end{algorithm}

The main advantage of the Horspool algorithm is its simplicity.
Especially, a window's shift value depends only on its last character,
and its cost is easily computed from the number of consecutive matching characters at its right end.
The Horspool algorithm does not require any advanced data structure and can be implemented in a few lines of code.

\subsection{Backward (Nondeterministic) DAWG Matching, B(N)DM}\label{sec:bdm}
The main idea of the BDM algorithm is to build a deterministic finite automaton (in this case, a suffix automaton, which is a directed acyclic word graph or DAWG) that recognizes all substrings of the reversed pattern, accepts all suffixes of the reversed pattern (including the empty suffix), and enters a FAIL state if a string has been read that is not a substring of the reversed pattern.

The suffix automaton processes the window right-to-left.
As long as the FAIL state has not been reached, we have read a substring of the reversed pattern.
If we are in an accepting state, we have even found a suffix of the reversed pattern (i.e., a prefix of~$p$). Whenever this happens before we have read $m$ characters, the last such event marks the next potential window start that may contain a match with $p$, and hence determines the shift.
When we are in an accepting state after reading $m$ characters, we have found a match, but this does not influence the shift.

So, $\cost{p}{w}$ is the number of characters read when entering FAIL,
or $m$ if $p=w$.
Let $I^p(w)\subseteq\{0,\ldots,m-1\}$ be the set defined by $i\in I^p(w)$ if and only if the suffix automaton of $\rev{p}$ is in an accepting state after reading $i$~characters of~$w$. 
Then
\begin{align*}
  \shift{p}{w} &= \min \big\{ m-i \,\big|\, i\in I^{p}(w)\big\}.
\end{align*}
Note that $I^{p}(w)$ is never empty as the suffix automaton accepts the empty string and, thus, $0\in I^{p}(w)$ for all windows $w$.

The advantage of BDM are long shifts, but its main disadvantage is the necessary construction of the suffix automaton, which is possible in $O(m)$ time via the suffix tree of the reversed pattern, but too expensive in practice to compete with other algorithms unless the search text is extremely long.

Constructing a nondeterministic finite automaton (NFA) instead of the deterministic suffix automaton is much simpler. However, processing a text character then does not take constant, but $O(m)$ time.
However, the NFA can be efficiently simulated with bit-parallel operations such that processing a text character takes $O(m/W)$ time, where $W$ is the machine word size.
For many patterns in practice, this is as good as $O(1)$.
The resulting algorithm is then called BNDM.

From the ``text character accesses'' analysis point of view, BDM and BNDM are equivalent, as they have the same shift and cost functions.

\subsection{Backward Oracle Matching, BOM}
BOM is similar to BDM, but the suffix automaton of the reversed pattern is replaced by a simpler deterministic automaton, the factor oracle (rarely also called suffix oracle).
It may recognize (accept) more strings than substrings (suffixes) of the reversed pattern, but is much easier to construct.
It still guarantees that, once the FAIL state is reached, the sequence of read characters is not a substring of the reversed pattern.

The cost and shift functions are defined as for BDM, but based on the oracle.
We refer to~\cite{Navarro2002} for the construction details and further properties of the oracle.
By construction, BOM never gives longer shifts than B(N)DM.
The main advantage of BOM over BDM is reduced space usage and preprocessing time; the factor oracle only has $m+1$ states and can be constructed faster than a suffix automaton.


\section{Deterministic Arithmetic Automata}
\label{sec:daa}


In this section, we introduce deterministic arithmetic automata (DAAs).
They extend ordinary deterministic finite automata (DFAs) by performing a computation while one moves from state to state.
Even though DAAs can be shown to be formally equivalent to families of DFAs on an appropriately defined larger state space,
they are a useful concept before introducing probabilistic arithmetic automata (PAAs) and allow us to construct PAAs for the analysis of pattern matching algorithms in a simpler way.

\begin{definition}[Deterministic Arithmetic Automaton, DAA]
A \emph{deterministic arithmetic automaton} is a tuple
\[\daa = \big( \stateset, \state_0, \Sigma, \delta, \valset, \val_0, \emiset, (\daaemi_\state)_{\state\in \stateset},(\op_\state)_{\state\in \stateset} \big),\] 
where~$\stateset$ is a finite set of states, $\state_0\in \stateset$ is the start state, 
$\Sigma$ is a finite alphabet, $\delta: \stateset\times\Sigma\to \stateset$ is a transition function, 
$\valset$ is a finite or countable set of values, $\val_0\in\valset$ is called the start value, 
$\emiset$ is a finite set of emissions, $\daaemi_\state\in \emiset$ is the emission associated to state~$\state$, 
and $\op_\state: \valset\times \emiset\to\valset$ is a binary operation associated to state~$\state$.
\end{definition}

Informally, a DAA starts with the state-value pair~$(\state_0,\val_0)$ and reads a sequence of symbols from~$\Sigma$.
Being in state~$\state$ with value~$\val$, upon reading~$\sigma\in\Sigma$, the DAA performs a state transition to~$\state':=\delta(\state,\sigma)$ and updates the value to~$\val':=\op_{\state'}(\val,\daaemi_{\state'})$ using the operation and emission of the new state~$\state'$.

Further, we define the associated joint transition function
\[ \hat{\delta}: (\stateset\times\valset)\times \Sigma \to (\stateset\times\valset), \qquad
       \hat{\delta} \big( (\state,\val),\sigma \big) := \big(\delta(\state,\sigma)\,,\, \op_{\delta(\state,\sigma)}(\val,\daaemi_{\delta(\state,\sigma)})\big).
\]

As usual, we extend the definition of $\hat{\delta}$ inductively from $\Sigma$ to $\Sigma^*$ in its second argument by 
$\hat{\delta}\big((\state,\val),\emptystring\big):=(\state,\val)$ for the empty string~$\emptystring$ and
$\hat{\delta}\big((\state,\val),x\sigma\big) := \hat{\delta}\big(\hat{\delta}((\state,\val),x),\sigma\big)$ for all~$x\in\Sigma^*$ and~$\sigma\in\Sigma$.
When~$\hat{\delta}\big((\state_0,\val_0),s\big) = (\state,\val)$ for some $\state\in \stateset$ and $s\in\Sigma^*$, we say that~$\daa$ computes value~$\val$ for input~
$s$ and define~$\daaval_{\daa}(s) := \val$.

For each state~$\state$, the emission~$\daaemi_\state$ is fixed and could be dropped from the definition of DAAs. 
In fact, one could also dispense with values and operations entirely and define a DFA over state space
$\stateset\times\valset$, performing the same operations as a DAA. 
However, we intentionally include values, operations, and emissions to emphasize the connection to PAAs (which are defined in Section~\ref{sec:paa}).

As a simple example for a DAA, take a standard DFA $(\stateset,\state_0,\Sigma,\delta,F)$ with~$F\subset \stateset$ being a set of final (or accepting) states.
To obtain a DAA that counts how many times the DFA visits an accepting state when reading~$x\in\Sigma^\ast$, 
let~$\emiset:=\{0,1\}$ and define~$\daaemi_\state:=1$ if~$\state\in F$, and~$\daaemi_\state:=0$ otherwise. 
Further define~$\valset=\setN$ with~$\val_0:=0$, and let the operation in each state be the usual addition: $\op_\state(\val,\emi):=\val+\emi$ for all~$\state$.
Then~$\daaval_\daa(x)$ is the desired count.


\section{Constructing DAAs for Pattern Matching Analysis}
\label{sec:construction}
For a given algorithm and pattern $p\in\Sigma^m$ with known shift and cost functions, $\fshift^p: \Sigma^m\to\{1,\ldots,m\}$, $w\mapsto\shift{p}{w}$ and $\fcost^p: \Sigma^m\to\setN$, $w\mapsto\cost{p}{w}$, we construct a DAA that upon reading a text $s\in\Sigma^\ast$ computes the total cost, defined as the sum of costs over all examined windows.
(Which windows are examined depends of course on the shift values of previously examined windows.)
Slightly abusing notation, we write $\cost{p}{s}$ for the total cost incurred on $s$.

While different constructions are possible (see also~\cite{Marschall2010}), the construction presented here has the advantage that it is simple to describe and implement and processes only one text character at a time. This property allows the construction of a product DAA that directly compares two algorithms as detailed in Section~\ref{sec:comparison}.

We define a DAA by
\begin{itemize}
\item $\stateset := \Sigma^m \times \{0,\dots,m\}$,
\item $\state_0  := (p,m)$,
\item $\valset   := \setN$,
\item $\val_0    := 0$,
\item $\emiset   := \{1,\dots,m\}$,
\item $\daaemi_{(w,x)} := \begin{cases}
	0 &\text{ if } x>0,\\
	\cost{p}{w} &\text{ if } x=0,
	\end{cases}$
\item $\op_q     : (\val,\emi) \mapsto \val+\emi$ for all $q\in\stateset$ (addition),
\item $\delta    : \big((w,x),\sigma\big) \mapsto \begin{cases}
	(\suff{w}\sigma,\, x-1) &\text{ if } x>0,\\
	(\suff{w}\sigma,\, \shift{p}{w}-1) &\text{ if } x=0,
	\end{cases}$\\
where $\suff{w}$ is the length-$(m-1)$ suffix of $w$, i.e.,
$\suff{w}:=\chr{w}{1}\dots\chr{w}{m-1}$.
\end{itemize}
Informally, the state $\state=(w,x)$ means that the last $m$~read characters spell~$w$ and that~$x$ more characters need to be read to get to the end of the current window.
For the start state $(p,m)$, the component~$p$ is arbitrary, as we need to read $m$ characters to reach the end of the first window.
The value accumulates the cost of examined windows.
Therefore, the operation is a simple addition in each state, and the emission of state $(w,x)$ specifies the cost to add. 
Consequently, the emission is zero if the state does not correspond to an examined window ($x>0$), and the emission equals the window cost $\cost{p}{w}$ if $x=0$.
The transition function $\delta$ specifies how to move from one state to the next when reading the next text character $\sigma\in\Sigma$: 
In any case, the window content is updated by forgetting the first character and appending the read~$\sigma$.
If the end of the current window has not been reached ($x>0$), the counter $x$ is decremented. 
Otherwise, the window's shift value is used to compute the number of characters till the next window aligns.

\begin{theorem}
With the DAA $\daa$ constructed as above, $\daaval_\daa(s)=\cost{p}{s}$ for all $s\in\Sigma^*$.
\end{theorem}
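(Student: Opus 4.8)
The plan is to prove the theorem by induction on the text length $\len{s}$, but the naive induction hypothesis ``$\daaval_\daa(s)=\cost{p}{s}$'' is too weak to carry through, because the value accumulated after reading a prefix of $s$ does not yet know where the \emph{next} window boundary falls. So first I would strengthen the claim to a statement about the full state-value pair $\dhat\big((\state_0,\val_0),s\big)=\big((w,x),\val\big)$ that simultaneously tracks three invariants: (i) $w$ is the length-$m$ suffix of the already-read text (padded suitably at the start), (ii) the counter $x$ records how many further characters must be read before the \emph{current} window's right end is reached, and (iii) $\val$ equals the total cost $\cost{p}{\cdot}$ incurred by all windows whose right end lies within the prefix read so far. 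The theorem then follows by specializing to the full input $s$ and observing that the emission mechanism charges $\cost{p}{w}$ to $\val$ exactly when $x$ hits $0$.

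The key steps, in order, are as follows. First I would set up the correspondence between the DAA's trajectory and the algorithm's sequence of examined windows: define the right-end positions $t_0=m-1$, $t_{k+1}=t_k+\shift{p}{w_k}$ where $w_k=\substr{s}{(t_k-m+1)}{t_k}$ is the $k$-th examined window, matching the variable $t$ in Algorithm~\ref{alg:horspool}. Then $\cost{p}{s}=\sum_{k:\,t_k<\len{s}}\cost{p}{w_k}$ by definition of the total cost. Second, I would verify the base case and the inductive step for the strengthened invariant. Reading one character $\sigma$ updates $w$ to $\suff{w}\sigma$ in both branches of $\delta$, confirming invariant~(i). For invariant~(ii), the counter decrements while $x>0$ and, upon reaching $x=0$ at a window's right end, is reset to $\shift{p}{w}-1$; I would check that this reset makes $x$ return to $0$ precisely $\shift{p}{w}$ characters later, i.e.\ exactly at the next right-end position $t_{k+1}$, so that the positions where $x=0$ are exactly $\{t_0,t_1,t_2,\dots\}$. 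Third, for invariant~(iii), I would use that the emission $\daaemi_{(w,x)}$ is $0$ whenever $x>0$ and equals $\cost{p}{w}$ whenever $x=0$; since $\op$ is plain addition and is applied at the \emph{new} state $\delta(\state,\sigma)$, the value picks up exactly $\cost{p}{w_k}$ as the trajectory passes through each right-end position, and nothing otherwise.

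The main obstacle I anticipate is the off-by-one bookkeeping around the emission timing. Because the update rule applies $\op_{\delta(\state,\sigma)}$ using the emission of the \emph{destination} state, the cost of window $w_k$ is added on the transition that \emph{lands} on the state with $x=0$, not on the transition that leaves it; I would have to be careful that this means the contribution $\cost{p}{w_k}$ is credited after reading character $t_k$ of the text and that the subsequent shift-driven reset (which reads the emission-zero destination) does not double-count or drop a window. A related subtlety is the start state $(p,m)$: since its first coordinate is declared arbitrary and $x=m>0$, the first $m-1$ transitions emit $0$ and only fill $w$ with genuine text, so that when $x$ first reaches $0$ the window $w$ genuinely equals $\substr{s}{0}{m-1}$, and the boundary condition matches $t_0=m-1$. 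Once these timing and initialization details are pinned down, the induction closes routinely and the theorem follows by reading off the $\val$-coordinate of $\dhat\big((\state_0,\val_0),s\big)$.
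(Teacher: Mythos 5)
Your proposal is correct and follows essentially the same route as the paper's proof: the paper likewise tracks the state $(w_i,x_i)$ reached after each prefix, verifies that $w_i$ is the rightmost length-$m$ window, that the value accumulates $\cost{p}{w_i}$ exactly at positions with $x_i=0$, and that these positions coincide with the recursively defined set of examined-window right ends. Your three invariants are exactly the paper's three observations, just packaged as a single strengthened induction hypothesis rather than as the identity $I_s=I'_s$.
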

\begin{proof}
The total cost $\cost{p}{s}$ can be written as the sum of costs of all processed windows: $\cost{p}{s}=\sum_{i\in I_s}\cost{p}{\substr{s}{i-m+1}{i}}$, where $I_s$ is the set of indices giving the processed windows, i.e.\ $I_s\subset\{m-1, \dots, \len{s}-1\}$ such that
\[i\in I_s \quad :\Longleftrightarrow \quad  i = m-1 \quad\text{or}\quad 
                 \exists j\in I_s: i = j+\shift{p}{\substr{s}{j-m+1}{j}}.\]
We have to prove that the DAA computes this value for $s\in\Sigma^*$. 

Let $(w_i,x_i)$ be the DAA state active after reading $\prefix{s}{i}$.
Observe that the transition function $\delta$ ensures that the $w_i$-component of $(w_i,x_i)$ reflects the rightmost length-$m$ window of $\prefix{s}{i}$, which can immediately be verified inductively. 
Thus, the emission on reading the last character $\chr{s}{i}$ of $\prefix{s}{i}$ with $i\geq m-1$ is, by definition of $\daaemi_{(w_i,x_i)}$, either $\cost{p}{\substr{s}{i-m+1}{i}}$ or zero, depending on the second component of $(w_i,x_i)$. 
As the operation is an addition for all states, $\daaval_\daa(s)=\sum_{i\in I'_s}\, \cost{p}{\substr{s}{i-m+1}{i}}$ for 
\[ I'_s := \big\{i\in\{0,\ldots,\len{s}-1\}: x_i=0\big\}. \]

It remains to show that $I_s=I'_s$. 
To this end, note that by $\delta$, we have $x_{i+1}=x_i-1$ if $x_{i+1}>0$ and $x_{i+1}=\shift{p}{w_i}-1$ if $x_{i+1}=0$. 
As $\state_0=(p,m)$, it follows that $m-1\in I'_s$. 
Using $w_i=\substr{s}{i-m+1}{i}$ for $i\geq m-1$, we conclude that whenever $x_i=0$, it follows that $x_j=0$ for $j=i+\shift{p}{\substr{s}{i-m+1}{i}}$ and that $x_{j'}>0$ for $i<j'<j$.
Hence we obtain that $i\in I'_s$ implies that
$i+\shift{p}{\substr{s}{i-m+1}{i}}\in I'_s$ and $i+k\notin I'_s$ for $0<k<\shift{p}{\substr{s}{i-m+1}{i}}$,
which completes the proof.
\end{proof}


\paragraph{DAA Minimization}
The size of the constructed DAA's state space depends exponentially on the pattern length, making the application for long patterns infeasible in practice.
However, depending on the particular circumstances (i.e., algorithm and pattern analyzed), the constructed DAA can often be substantially reduced by applying a modified version of Hopcroft's algorithm for DFA minimization~\cite{Hopcroft1971}.

Hopcroft's algorithm minimizes a DFA in $O(|\stateset|\log |\stateset|)$ time by iteratively refining a partition of the state set. 
In the beginning, all states are partitioned into two distinct sets: one containing the accepting stats, and the other containing the non-accepting states.
This partition is iteratively refined whenever a reason for non-equivalence of two states in the same set is found.
Upon termination, the states are partitioned into sets of equivalent states. 
Refer to~\cite{Knuutila2001} for an in-depth explanation of Hopcroft's algorithm.

The algorithm can straightforwardly be adapted to minimize DAAs by choosing the initial state set partition appropriately. 
In our case, each DAA state is associated with the same operation. 
The only differences in state's behavior thus stem from different emissions. 
Therefore, Hopcroft's algorithm can be initialized by the partition induced by the emissions and then continued as usual.

As we exemplify in Section~\ref{sec:case_studies}, this leads to a considerable reduction of the number of states.


\section{Probabilistic Arithmetic Automata}
\label{sec:paa}

This section introduces finite-memory random text models and explains how to construct a \emph{probabilistic arithmetic automaton} (PAA) from a (minimized) DAA and a random text model.
PAAs were introduced in~\cite{Marschall2008}, where they  are used to compute pattern occurrence count distributions.
Other applications in biological sequence analysis include the exact computation of p-values of sequence motifs~\cite{MarschallRahmann2009EfficientExactMotifDiscovery}, 
and the determination of seed sensitivity for pairwise sequence alignment algorithms based on filtering~\cite{HermsRahmann2008ComputingSeedSensitivity}.

\subsection{Random Text Models}\label{sec:text_model}
Given an alphabet~$\Sigma$, a random text is a stochastic process~$(S_t)_{t\in\setN_0}$, where each $S_t$ takes values in~$\Sigma$. A text model $\prob$ is a probability measure assigning probabilities to (sets of) strings. 
It is given by (consistently) specifying the probabilities $\prob(S_0\ldots S_{\len{s}-1}=s)$ for all~$s\in\Sigma^\ast$. 
We only consider finite-memory models in this article which are formalized in the following definition.

\begin{definition}[Finite-memory text model]\label{def:text_model}
A finite-memory text model is a tuple $(\mathcal{C},c_0,\Sigma,\varphi)$, where $\mathcal{C}$ is a finite state space (called \emph{context space}), $c_0\in\mathcal{C}$ a start context, $\Sigma$ an alphabet, and $\varphi: \mathcal{C}\times\Sigma\times\mathcal{C}\to[0,1]$ a transition function with $\sum_{\sigma\in\Sigma,c'\in\mathcal{C}}\varphi(c,\sigma,c')=1$ for all $c\in\mathcal{C}$. The random variable giving the text model state after $t$~steps is denoted~$C_t$ with $C_0:\equiv c_0$.
A probability measure is now induced by stipulating
\[ \prob(S_0\ldots S_{\totalsteps-1}=s,C_1=c_1,\ldots,C_\totalsteps=c_\totalsteps) 
  := \prod_{i=0}^{\totalsteps-1}\, \varphi(c_{i},\chr{s}{i},c_{i+1})
\]
for all $\totalsteps\in\setN_0$, $s\in\Sigma^\totalsteps$, and $(c_1,\ldots,c_\totalsteps)\in\mathcal{C}^\totalsteps$.
\end{definition}

The idea is that the model given by $(\mathcal{C},c_0,\Sigma,\varphi)$ generates a random text by moving from context to context and emitting a character at each transition, where $\varphi(c,\sigma,c')$ is the probability of moving from context~$c$ to context~$c'$ and thereby generating the letter~$\sigma$.

Note that the probability $\prob(S_0\ldots S_{\len{s}-1}=s)$ is obtained by marginalization over all context sequences that generate $s$.
This can be efficiently done, using the decomposition of the following lemma.

\begin{lemma}\label{lem:text_model}
Let $(\mathcal{C},c_0,\Sigma,\varphi)$ be a finite-memory text model. Then, 
\[ \prob(S_0\ldots S_{\totalsteps}=s\sigma,C_{\totalsteps+1}=c)
   = \sum_{c'\in\mathcal{C}}\, \prob(S_0\ldots S_{\totalsteps-1}=s,C_\totalsteps=c') \cdot \varphi(c',\sigma,c)
\]
for all $n\in\setN_0$, $s\in\Sigma^n$, $\sigma\in\Sigma$ and $c\in\mathcal{C}$.
\end{lemma}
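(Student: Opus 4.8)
The plan is to prove Lemma~\ref{lem:text_model} directly from the defining product formula for the finite-memory text model, by decomposing the joint probability on the left-hand side according to the value of the penultimate context~$c'=C_\totalsteps$. The key observation is that the probability appearing on the left, namely $\prob(S_0\ldots S_{\totalsteps}=s\sigma,\,C_{\totalsteps+1}=c)$, is a marginal of the fully specified joint distribution over both the character sequence and the entire context path; summing over the intermediate contexts is exactly what the lemma encodes in a one-step recursive form.

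First I would write out the quantity $\prob(S_0\ldots S_{\totalsteps}=s\sigma,\,C_{\totalsteps+1}=c)$ as a marginalization over all context sequences $(c_1,\ldots,c_\totalsteps)\in\mathcal{C}^\totalsteps$ with the final context $C_{\totalsteps+1}$ fixed to $c$, invoking Definition~\ref{def:text_model} to replace the joint probability of a fully specified context path by the product $\prod_{i=0}^{\totalsteps}\varphi(c_i,\chr{(s\sigma)}{i},c_{i+1})$, where $c_0$ is the fixed start context and $c_{\totalsteps+1}=c$. The text $s\sigma$ has length $\totalsteps+1$, so its characters are $\chr{s}{0},\ldots,\chr{s}{\totalsteps-1},\sigma$; the crucial structural fact is that only the last factor $\varphi(c_\totalsteps,\sigma,c)$ involves the newly appended letter $\sigma$ and the fixed endpoint $c$, while all earlier factors depend on $s$ and the path $c_1,\ldots,c_\totalsteps$ alone.

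The main step is then to split the single summation over $(c_1,\ldots,c_\totalsteps)$ into an outer sum over the last context $c':=c_\totalsteps$ and an inner sum over $(c_1,\ldots,c_{\totalsteps-1})$. Because the product factorizes as $\big[\prod_{i=0}^{\totalsteps-1}\varphi(c_i,\chr{s}{i},c_{i+1})\big]\cdot\varphi(c_\totalsteps,\sigma,c)$ and the trailing factor $\varphi(c',\sigma,c)$ does not depend on $c_1,\ldots,c_{\totalsteps-1}$, I can pull it out of the inner sum. The inner sum over $(c_1,\ldots,c_{\totalsteps-1})$ of the remaining product is, again by Definition~\ref{def:text_model}, precisely the marginal $\prob(S_0\ldots S_{\totalsteps-1}=s,\,C_\totalsteps=c')$. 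Substituting this identifies the expression with $\sum_{c'\in\mathcal{C}}\prob(S_0\ldots S_{\totalsteps-1}=s,\,C_\totalsteps=c')\cdot\varphi(c',\sigma,c)$, which is the claimed right-hand side.

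I expect the only real subtlety to be a bookkeeping one: keeping the index ranges straight when the appended character lengthens the string from $\totalsteps$ to $\totalsteps+1$ characters and correspondingly extends the context path to $C_{\totalsteps+1}$, and handling the base case $\totalsteps=0$ (where $s=\emptystring$ and the inner product is empty) cleanly so that the degenerate marginal reduces to the indicator that $c'=c_0$. There is no genuine analytic difficulty here; the content of the lemma is just that the defining product telescopes into a one-step transition recurrence, so the proof amounts to isolating the final factor and recognizing the residual sum as the shorter-string marginal.
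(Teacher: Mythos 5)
Your proposal is correct and follows essentially the same route as the paper's own proof: both expand the left-hand side as a marginalization over the full context path via the defining product formula, isolate the final factor $\varphi(c_\totalsteps,\sigma,c)$, and recognize the inner sum over $(c_1,\ldots,c_{\totalsteps-1})$ as the marginal $\prob(S_0\ldots S_{\totalsteps-1}=s,\,C_\totalsteps=c')$. The paper performs exactly this split-and-rename computation (renaming $c_\totalsteps$ to $c'$ at the end), so there is nothing substantive to add.
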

\begin{proof}
We have
\begin{align*}
 & \prob(S_0\ldots S_{\totalsteps}=s\sigma,C_{\totalsteps+1}=c) \\
=& \sum_{c_1,\ldots,c_\totalsteps}\, \prob(S_0\ldots S_{\totalsteps}=s\sigma, C_1=c_1,\ldots,C_{\totalsteps}=c_\totalsteps,C_{\totalsteps+1}=c) \\
=& \sum_{c_1,\ldots,c_\totalsteps}\, \prod_{i=0}^{\totalsteps-1}\, \varphi(c_{i},\chr{s}{i},c_{i+1})\cdot\varphi(c_\totalsteps,\sigma,c) \\
=& \sum_{c_{\totalsteps}\in\mathcal{C}}\, 
   \left(\sum_{c_1,\ldots,c_{\totalsteps-1}}\, \prod_{i=0}^{\totalsteps-1}\,   
   \varphi(c_{i},\chr{s}{i},c_{i+1})\right) \cdot \varphi(c_\totalsteps,\sigma,c) \\
=& \sum_{c_{\totalsteps}\in\mathcal{C}}\prob(S_0\ldots S_{\totalsteps-1}=s,C_{\totalsteps}=c_n)  \cdot\varphi(c_\totalsteps,\sigma,c)\,.
\end{align*}
Renaming $c_n$ to $c'$ yields the claimed result.
\end{proof}

Similar text models are used in~\cite{Kucherov2006}, where they a called probability transducers. 
In the following, we refer to a finite-memory text model $(\mathcal{C},c_0,\Sigma,\varphi)$ simply as text model, as all text models considered in this article are special cases of Definition~\ref{def:text_model}.

For an i.i.d.\ model, we set $\mathcal{C}=\{\emptystring\}$ and $\varphi(\emptystring,\sigma,\emptystring)=p_\sigma$ for each~$\sigma\in\Sigma$, where~$p_\sigma$ is the occurrence probability of letter~$\sigma$ (and $\emptystring$ may be interpreted as an empty context).
For a Markovian text model of order~$r$, the distribution of the next character depends on the~$r$ preceding characters (fewer at the beginning); thus $\mathcal{C}:=\bigcup_{i=0}^{r}\Sigma^i$.
This notion of text models also covers variable order Markov chains as introduced in~\cite{Schulz2008}, which can be converted into equivalent models of fixed order. Text models as defined above have the same expressive power as character-emitting HMMs, that means, they allow to construct the same probability distributions.

%

\subsection{Basic PAA Concepts}\label{sec:paa_def}
Probabilistic arithmetic automata (PAAs), as introducted in~\cite{Marschall2008}, are a generic concept useful to model probabilistic chains of operations. In this section, we sum up the definition and basic recurrences needed in this article.

\begin{definition}[Probabilistic Arithmetic Automaton, PAA]\label{def:paa}
A \emph{probabilistic arithmetic automaton} is a tuple $\paa = \paatuple$, where $\stateset$, $\state_0$, $\valset$, $\val_0$, $\emiset$ and $\op$ have the same meaning as for a DAA, each~$\emidist_\state$ is a state-specific probability distribution on the emissions~$\emiset$, and~$\paatrans:\stateset\times\stateset\to[0,1]$ is a transition function, such that~$\paatrans(\state,\state')$ gives the probability of a transition from state~$\state$ to state~$\state'$, i.e.\ $\big(\paatrans(\state,\state')\big)_{\state,\state'\in\stateset}$ is a stochastic matrix.

A PAA induces three stochastic processes:
(1) the state process $(\stateproc_t)_{t\in\setN}$ with values in~$\stateset$,
(2) the emission process $(\emiproc_t)_{t\in\setN}$ with values in~$\emiset$, and
(3) the value process $(\valproc_t)_{t\in\setN}$ with values in~$\valset$ such that
$\valset_0 :\equiv \val_0$ and $\valset_t := \op_{\stateproc_t}\left(\valproc_{t-1}, \emiproc_t\right).$
\end{definition}

We now restate the PAA recurrences from~\cite{Marschall2008} to compute the state-value distribution after~$t$ steps.
For the sake of a shorter notation, we define~$f_{t}(\state,\val):=\prob(\stateproc_t=\state, \valproc_t=v)$.
Since we are generally only interested in the value distribution, note that it can be obtained by marginalization:
$\prob(\valproc_t=\val) = \sum_{\state\in\stateset}\, f_{t}(\state,\val)$.

\begin{lemma}[State-value recurrence, \cite{Marschall2008}]\label{lem:paa}\hfill\\
The state-value distribution is given by
$f_0(\state,\val) = 1$ if~$\state=\state_0$ and~$\val=\val_0$, and $f_0(\state,\val)=0$ otherwise. 
For~$t\geq 0$,
\begin{equation}\label{eqn:paa_recurrence}
  f_{t+1}(\state,\val) = 
  \sum_{\state'\in\stateset}\, \sum_{(\val',\emi)\in\op^{-1}_\state(\val)}\,
   f_t(\state',\val')\cdot \paatrans(\state',\state)\cdot \emidist_\state(\emi),
\end{equation}
where~$\op^{-1}_{\state}(\val)$ denotes the inverse image set of~$\val$ under~$\op_\state$.
\end{lemma}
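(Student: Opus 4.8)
The plan is to prove both assertions by induction on the number of steps~$t$, handling the initialization directly and obtaining the recurrence for the step from the defining dynamics of the three induced processes. For the base case, recall that the state process starts in the start state, $\stateproc_0 \equiv \state_0$, and the value process is initialized by $\valproc_0 \equiv \val_0$. Hence the event $\{\stateproc_0=\state,\valproc_0=\val\}$ has probability~$1$ precisely when $\state=\state_0$ and $\val=\val_0$, and probability~$0$ otherwise, which is exactly the claimed form of~$f_0$.

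For the inductive step, fix $t\geq 0$ and a target pair $(\state,\val)$. The structural fact I would exploit is the defining update $\valproc_{t+1}=\op_{\stateproc_{t+1}}(\valproc_t,\emiproc_{t+1})$: conditioned on $\stateproc_{t+1}=\state$, the new value equals $\op_\state(\valproc_t,\emiproc_{t+1})$, so $\{\valproc_{t+1}=\val\}$ becomes the deterministic event $\{\op_\state(\valproc_t,\emiproc_{t+1})=\val\}$. I would therefore expand $f_{t+1}(\state,\val)=\prob(\stateproc_{t+1}=\state,\valproc_{t+1}=\val)$ by the law of total probability over the previous state $\stateproc_t=\state'$, the previous value $\valproc_t=\val'$, and the current emission $\emiproc_{t+1}=\emi$. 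Because the update is deterministic, only triples with $\op_\state(\val',\emi)=\val$ survive, i.e.\ exactly the pairs $(\val',\emi)\in\op^{-1}_\state(\val)$; so summing over $\state'\in\stateset$ and over $(\val',\emi)\in\op^{-1}_\state(\val)$ simply discards the vanishing terms.

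It then remains to factor the joint probability $\prob(\stateproc_t=\state',\valproc_t=\val',\stateproc_{t+1}=\state,\emiproc_{t+1}=\emi)$ into the three advertised factors, and this is where the dependency structure of a PAA must be invoked; I expect it to be the main (and really the only) obstacle, since the definition specifies the induced processes somewhat informally. Concretely, I would use that the state process is a Markov chain governed by~$\paatrans$, so that $\stateproc_{t+1}=\state$ is conditionally independent of $\valproc_t$ given $\stateproc_t=\state'$ and contributes the factor $\paatrans(\state',\state)$, together with the fact that the emission $\emiproc_{t+1}$ depends only on the current state $\stateproc_{t+1}=\state$ through $\emidist_\state$, contributing $\emidist_\state(\emi)$ independently of the past. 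The remaining factor $\prob(\stateproc_t=\state',\valproc_t=\val')$ is exactly $f_t(\state',\val')$. Substituting these factors yields
\[
 f_{t+1}(\state,\val)=\sum_{\state'\in\stateset}\sum_{(\val',\emi)\in\op^{-1}_\state(\val)} f_t(\state',\val')\cdot\paatrans(\state',\state)\cdot\emidist_\state(\emi),
\]
as claimed. The single point requiring care is to state these conditional-independence properties precisely — that the current state screens off the history for both the next state and the current emission — after which everything reduces to a routine application of total probability.
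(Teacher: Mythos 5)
The paper does not actually prove this lemma: it is restated from~\cite{Marschall2008} and used as a black box (it enters the present paper only as the first step \eqref{eqn:paadaa1}$\to$\eqref{eqn:paadaa2} in the proof of Lemma~\ref{lem:daapaa}). So there is no in-paper argument to compare against; your proposal supplies the missing derivation, and it is the right one. The induction base is immediate from $\stateproc_0\equiv\state_0$ and $\valproc_0\equiv\val_0$; the inductive step is a law-of-total-probability expansion over $(\stateproc_t,\valproc_t,\emiproc_{t+1})$ in which the deterministic update $\valproc_{t+1}=\op_{\stateproc_{t+1}}(\valproc_t,\emiproc_{t+1})$ restricts the sum to $(\val',\emi)\in\op^{-1}_\state(\val)$, and the factorization
\[
\prob(\stateproc_t=\state',\valproc_t=\val',\stateproc_{t+1}=\state,\emiproc_{t+1}=\emi)
 = f_t(\state',\val')\cdot\paatrans(\state',\state)\cdot\emidist_\state(\emi)
\]
is exactly what Definition~\ref{def:paa} intends. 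You correctly identify the one genuine gap, which lies in the definition rather than in your argument: Definition~\ref{def:paa} specifies $\paatrans$ and $\emidist$ but never states the joint law of the three processes, so the two screening-off properties you invoke --- that $\stateproc_{t+1}$ is conditionally independent of the entire history (hence of $\valproc_t$, which is a function of $\stateproc_{1..t},\emiproc_{1..t}$) given $\stateproc_t$, and that $\emiproc_{t+1}$ is conditionally independent of the history given $\stateproc_{t+1}$ --- must be adopted as part of the model. Once they are stated as axioms of the PAA semantics, your proof is complete; note for consistency that Equation~\eqref{eqn:paadaa} in Lemma~\ref{lem:daapaa} tacitly relies on the same two assumptions, so nothing in the paper is in tension with your reading.
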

The recurrence in Lemma~\ref{lem:paa} resembles the Forward recurrences known from HMMs.

Note that the range of $\valproc_t$ is finite for each $t$, even when $\valset$ is infinite, as $\valproc_t$ is a function of the states and emissions up to time $t$, and state set $\stateset$ and emission set $\emiset$ are finite.
We define $\valset_t := \text{range } \valproc_t$ and $\valsize_\totalsteps := \max_{0\leq t\leq \totalsteps}\, |\valset_t|$.
Clearly $\valsize_\totalsteps \leq (|\stateset|\cdot|\emiset|)^\totalsteps$. 
Therefore all actual computations are on finite sets.
When analyzing the number of character accesses of a pattern matching algorithm, we have $\valset_t\subset\{0,\ldots,m(\totalsteps-m+1)\}$, as  at most $(\totalsteps-m+1)$ search windows are processed, each causing at most $m$ character accesses. Thus, $\valsize_\totalsteps\in O(\totalsteps\cdot m)$.

\subsection{Constructing a PAA from a DAA and a Text Model}\label{sec:daapaa}
We now formally state how to combine a DAA and a text model into a PAA that allows us to compute the distribution of values produced by the DAA when processing a random text.

\begin{lemma}[DAA $+$ Text model $\to$ PAA]\label{lem:daapaa}
Let $(\mathcal{C},c_0,\Sigma,\varphi)$ be a text model and $\daa = \big(\stateset^\daa, \state_0^\daa, \Sigma, \delta, \valset, \val_0, \emiset, (\daaemi_\state)_{\state\in \stateset^\daa}, (\op_\state^\daa)_{\state\in \stateset^\daa} \big)$ be a DAA. Then, define
\begin{itemize}
 \item a state space $\stateset:=\stateset^\daa\times\mathcal{C}$,
 \item a start state~$\state_0:=(\state_0^\daa,c_0)$,
 \item transition probabilities
 \begin{equation}\label{eqn:daa_paa_transfunc}
 \paatrans\big((\state,c),(\state',c')\big):=\sum_{\sigma\in\Sigma:\,\delta(\state,\sigma)=\state'}\varphi(c,\sigma,c'),
 \end{equation}
 \item (deterministic) emission probability vectors
       \[\emidist_{(\state,c)}(\emi):=
       \begin{cases}
       1 & \mbox{if } \emi=\daaemi_\state\,, \\
       0 & \mbox{otherwise}\,,
       \end{cases}\]
       for all $(\state,c)\in \stateset$.
 \item operations $\op_{(\state,c)}(\val,\emi):=\op^\daa_\state(\val,\emi)$ for all $(\state,c)\in \stateset$.
\end{itemize}
Then, $\paa = \paatuple$ is a PAA with
\[\dist(\valproc_t) = \dist\big(\daaval_{\daa}(S_{0}\dots S_{t-1})\big),\]
for all $t\in\setN_0$, where~$S$ is a random text according to the text model $(\mathcal{C},c_0,\Sigma,\varphi)$. 
States having zero probability of being reached from $\state_0$ may be omitted from $\stateset$ and $\paatrans$. 
For such a PAA, the value distribution $\dist(\valproc_\totalsteps)$ can be computed with $O(\totalsteps\cdot |\stateset^\daa|\cdot|\mathcal{C}|^2\cdot |\Sigma| \cdot \valsize_\totalsteps)$ operations using $O(|\stateset^\daa|\cdot|\mathcal{C}|\cdot \valsize_\totalsteps)$ space. 
If for all $c\in\mathcal{C}$ and $\sigma\in\Sigma$, there exists at most one $c'\in\mathcal{C}$ such that $\varphi(c,\sigma,c')>0$, then the runtime is bounded by $O(\totalsteps\cdot |\stateset^\daa|\cdot|\mathcal{C}|\cdot |\Sigma| \cdot \valsize_\totalsteps)$.
\end{lemma}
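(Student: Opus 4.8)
The plan is to proceed in three stages: first confirm that the tuple $\paa$ is a legitimate PAA, then prove the distributional identity $\dist(\valproc_t)=\dist(\daaval_\daa(S_0\dots S_{t-1}))$ by induction on $t$ through a state-resolved refinement, and finally read the complexity bounds off the recurrence of Lemma~\ref{lem:paa}. For well-definedness, the emission vectors $\emidist_{(\state,c)}$ are trivially distributions since they place unit mass on $\daaemi_\state$, so only the stochasticity of $\paatrans$ needs checking. I would compute the row sum $\sum_{(\state',c')}\paatrans((\state,c),(\state',c'))$; inserting the definition~\eqref{eqn:daa_paa_transfunc} and using that $\delta(\state,\cdot)$ is a function, so each $\sigma$ contributes to exactly one target state, collapses this to $\sum_{\sigma\in\Sigma,\,c'\in\mathcal{C}}\varphi(c,\sigma,c')$, which equals $1$ by the normalization in Definition~\ref{def:text_model}.

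The core is the distributional identity, which I would not attack at the level of value marginals but through the finer, state-resolved invariant
\[
f_t\big((\state,c),\val\big)=\sum_{\substack{s\in\Sigma^t:\\ \hat{\delta}((\state_0^\daa,\val_0),s)=(\state,\val)}}\prob\big(S_0\dots S_{t-1}=s,\,C_t=c\big),
\]
proved by induction on $t$. The base case $t=0$ is immediate, both sides being $1$ exactly when $(\state,c,\val)=(\state_0^\daa,c_0,\val_0)$ and $0$ otherwise. For the inductive step I would expand $f_{t+1}$ via Lemma~\ref{lem:paa}; the deterministic emission collapses the inner sum to the single emission $\daaemi_\state$, leaving a sum over predecessors $(\state',c')$ and over values $\val'$ with $\op_\state^\daa(\val',\daaemi_\state)=\val$. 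Substituting the induction hypothesis for $f_t((\state',c'),\val')$ and the definition of $\paatrans$, I would recombine the result as a sum over strings $s\sigma\in\Sigma^{t+1}$: the DAA condition factors as $\hat{\delta}((\state_0^\daa,\val_0),s)=(\state',\val')$ together with $\delta(\state',\sigma)=\state$ and $\op_\state^\daa(\val',\daaemi_\state)=\val$, which is precisely the unfolding of $\hat{\delta}((\state_0^\daa,\val_0),s\sigma)=(\state,\val)$, while the context part is exactly the one-step decomposition supplied by Lemma~\ref{lem:text_model}.

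Marginalizing the invariant over $\state$ and $c$ then finishes the law identity: summing over $\state$ replaces the DAA condition by $\daaval_\daa(s)=\val$, and summing over $c$ together with $\prob(S_0\dots S_{t-1}=s)=\sum_c\prob(S_0\dots S_{t-1}=s,C_t=c)$ gives $\prob(\valproc_t=\val)=\prob(\daaval_\daa(S_0\dots S_{t-1})=\val)$. Dropping states unreachable from $\state_0$ is harmless, since such a state carries $f_t=0$ for every $t$ and contributes to no marginal. The step I expect to be the main obstacle is the bookkeeping in the inductive recombination: one must verify that the constraint $\op_\state^\daa(\val',\daaemi_\state)=\val$ depends only on $\val'$ once the target $\state$ is fixed, so that it factors cleanly out of the sum over $\sigma$, and that the two simultaneous collapses — the deterministic DAA emission through Lemma~\ref{lem:paa} and the context decomposition through Lemma~\ref{lem:text_model} — align exactly so that the reassembled sum is the intended $s\sigma$ form.

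For the complexity I would read the cost directly off the update of Lemma~\ref{lem:paa}. Storing $f_t$ indexed by $(\state,c,\val)$ takes $O(|\stateset^\daa|\cdot|\mathcal{C}|\cdot\valsize_\totalsteps)$ space, and advancing one step processes every triple $(\state',c',\val')$ and, for each, every pair $(\sigma,c)$ with $\varphi(c',\sigma,c)>0$; bounding the number of such pairs by $|\Sigma|\cdot|\mathcal{C}|$ yields per-step cost $O(|\stateset^\daa|\cdot|\mathcal{C}|^2\cdot|\Sigma|\cdot\valsize_\totalsteps)$, hence the general bound $O(\totalsteps\cdot|\stateset^\daa|\cdot|\mathcal{C}|^2\cdot|\Sigma|\cdot\valsize_\totalsteps)$ over the $\totalsteps$ steps. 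Under the stated determinacy assumption each $(c',\sigma)$ admits at most one target context $c$, so the per-triple work drops to $O(|\Sigma|)$ and the runtime improves to $O(\totalsteps\cdot|\stateset^\daa|\cdot|\mathcal{C}|\cdot|\Sigma|\cdot\valsize_\totalsteps)$, as claimed.
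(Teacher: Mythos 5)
Your proposal is correct and follows essentially the same route as the paper: the identical state-resolved invariant $f_t((\state,c),\val)=\sum_{s}\iverl\hat\delta((\state_0^\daa,\val_0),s)=(\state,\val)\iverr\,\prob(S_0\dots S_{t-1}=s,C_t=c)$ proved by induction via the recurrence of Lemma~\ref{lem:paa} and the one-step context decomposition of Lemma~\ref{lem:text_model}, followed by marginalization and the same push-style accounting for the runtime bounds. Your explicit check that $\paatrans$ is row-stochastic is a small addition the paper leaves implicit, but it does not change the argument.
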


\begin{proof} 
$\paa$ is a PAA by Definition~\ref{def:paa}.
As in Section~\ref{sec:paa_def}, we define $f_{t}(\state,\val) := \prob(\stateproc_t=\state,\valproc_t=\val)$.
Iverson brackets are written $\iverl\cdot\iverr$, i.e.\ $\iverl A\iverr=1$ if the statement~$A$ is true and $\iverl A\iverr=0$ otherwise. 

To prove $\dist(\valproc_{t}) = \dist\big(\daaval_{\daa}(S_{0}\dots S_{{t}-1})\big)$, we show that 
\begin{equation}\label{eqn:paadaa}
f_{t}\big((\state^\daa,c),\val\big) = \sum_{s\in\Sigma^{t}}\big\iverl\hat{\delta}\big((\state_0^\daa,\val_0),s\big)=(\state^\daa,\val)\big\iverr\cdot\prob(S_0\ldots S_{t-1}=s,C_t=c)
\end{equation}
for all $\state^\daa\in \stateset^\daa$, $c\in\mathcal{C}$, $\val\in\valset$, and $t\in\setN_0$. For~$t=0$, Equation~\eqref{eqn:paadaa} is correct by definitions of PAAs, DAAs and text models.
For $t>0$ we prove it inductively.
Assume~\eqref{eqn:paadaa} to be correct for all $t'$ with $0\leq t'<t$.
Then
\begin{align}
  & f_{t}\big(\underbrace{(\state^\daa,c)}_{=:\state},\val\big)  \label{eqn:paadaa1} \\
= & \sum_{\state'\in \stateset}\; \sum_{(\val',\emi)\in\op^{-1}_\state(\val)}\, 
     f_{t-1}(\state',\val')\cdot \paatrans(\state',\state)\cdot \emidist_\state(\emi)  \label{eqn:paadaa2}\\
= & \sum_{\state'\in \stateset}\; \sum_{(\val',\emi)\in\valset\times \emiset}\, 
    \big\iverl\op^\daa_{\state^\daa}(\val',\emi)=\val\big\iverr \cdot f_{t-1}(\state',\val')
    \cdot \paatrans(\state',\state)\cdot\big\iverl\daaemi_{\state^\daa}=\emi\big\iverr \label{eqn:paadaa3}\\
\begin{split}
= & \sum_{\state'^\daa\in \stateset^\daa}\; \sum_{c'\in\mathcal{C}}\; \sum_{(\val',\emi)\in\valset\times \emiset}\,
    \big\iverl\op^\daa_{\state^\daa}(\val',\emi)=\val\big\iverr  \cdot \big\iverl\daaemi_{\state^\daa}=\emi\big\iverr 
    \cdot f_{t-1}(\state',\val')\\
    & \qquad \cdot \sum_{\sigma\in\Sigma}\, \big\iverl\delta(\state'^\daa,\sigma)=\state^\daa\big\iverr
    \cdot \varphi(c',\sigma,c)
\end{split}\label{eqn:paadaa4}\\
\begin{split}
= & \sum_{s\in\Sigma^{t-1}}\; \sum_{\sigma\in\Sigma}\; \sum_{\state'^\daa\in \stateset^\daa}\;
    \sum_{c'\in \mathcal{C}}\; \sum_{(\val',\emi)\in\valset\times \emiset}\,
    \big\iverl \op^\daa_{\state^\daa}(\val',\emi)=\val\big\iverr \cdot 
    \big\iverl \daaemi_{\state^\daa}=\emi\big\iverr \\
& \qquad \cdot \big\iverl \delta(\state'^\daa,\sigma)=\state^\daa\big\iverr \cdot
    \big\iverl \hat{\delta}\big((\state_0^\daa,\val_0),s\big)=(\state'^\daa,\val')\big\iverr \\
& \qquad \cdot \prob(S_0\ldots S_{t-2}=s,C^{t-1}=c') \cdot \varphi(c',\sigma,c)
\end{split}  \label{eqn:paadaa5}\\
\begin{split}
= & \sum_{s\sigma\in\Sigma^t}\; \sum_{\state'^\daa\in \stateset^\daa}\;
    \sum_{(\val',\emi)\in\valset\times \emiset}\, 
    \big\iverl \op^\daa_{\state^\daa}(\val',\emi)=\val\big\iverr  \cdot
    \big\iverl \daaemi_{\state^\daa}=\emi\big\iverr \cdot
    \big\iverl\hat{\delta}\big((\state_0^\daa,\val_0),s\big)=(\state'^\daa,\val')\big\iverr \\
& \qquad \cdot \big\iverl \delta(\state'^\daa,\sigma)=\state^\daa\big\iverr
    \cdot \prob(S_0\ldots S_{t-1}=s\sigma,C_t=c)
\end{split}\label{eqn:paadaa6}\\
= & \sum_{s\sigma\in\Sigma^t}\, 
    \big\iverl\hat{\delta}\big((\state_0^\daa,\val_0),s\sigma\big)=(\state^\daa,\val)\big\iverr \cdot
    \prob(S_0\ldots S_{t-1}=s\sigma,C_t=c)  \label{eqn:paadaa7}
\end{align}
In the above derivation, step \eqref{eqn:paadaa1}$\to$\eqref{eqn:paadaa2} follows from~\eqref{eqn:paa_recurrence}.
Step \eqref{eqn:paadaa2}$\to$\eqref{eqn:paadaa3} follows from the definitions of $\op_\state$ and $\emidist_\state$.
Step \eqref{eqn:paadaa3}$\to$\eqref{eqn:paadaa4} uses the definitions of~$\paatrans$ and $\stateset$ in Lemma~\ref{lem:daapaa}.
Step \eqref{eqn:paadaa4}$\to$\eqref{eqn:paadaa5} uses the induction assumption.
Step \eqref{eqn:paadaa5}$\to$\eqref{eqn:paadaa6} uses Lemma~\ref{lem:text_model}.
The final step \eqref{eqn:paadaa6}$\to$\eqref{eqn:paadaa7} follows by combining the four Iverson brackets summed over $\state'^\daa$ and $(\val',\emi)$ into a single Iverson bracket.

To compute the table $f_\totalsteps$ containing $f_\totalsteps(\state,\val)$ for all $\state\in\stateset$ and $\val\in\valset$, we start with $f_0$ and perform $\totalsteps$ update steps. The given runtime bounds can be verified by considering a ``push'' algorithm: When computing $f_{t+1}$, we initialize the table with zeros and iterate over all $\state\in\stateset$, $\val\in\valset$ and $\state'\in\{\state''\in\stateset:\paatrans(\state,\state'')>0\}$; for each combination of $\state$, $\val$, and $\state'$, we add $f_t(\state,\val)\cdot\paatrans(\state,\state')$ to $f_{t+1}\big(\state',\op_{\state'}(\val,\daaemi_{\state'})\big)$.
\end{proof}

As a direct consequence of the above lemma and of the DAA construction from Section~\ref{sec:construction}, we arrive at our main theorem.

\begin{theorem}\label{thm:main}
Given a finite-memory text model $(\mathcal{C},c_0,\Sigma,\varphi)$,
a window-based pattern matching algorithm $A$, a pattern $p$ with $|p|=m$,
and the functions $\fshift^{A,p}$ and $\fcost^{A,p}$,
the cost distribution $\dist(X_\totalsteps^{A,p})$ can be computed using $O(\totalsteps^2\cdot m\cdot |\stateset^\daa|\cdot|\mathcal{C}|^2\cdot |\Sigma|)$ time and $O(|\stateset^\daa|\cdot|\mathcal{C}|\cdot\totalsteps\cdot m)$ space. 
Since $|\stateset^\daa|$ is bounded by $O(m\cdot\Sigma^m)$, the computation uses $O(\totalsteps^2\cdot m^2\cdot\Sigma^{m+1}\cdot|\mathcal{C}|^2)$ time and $O(m^2\cdot\Sigma^m\cdot|\mathcal{C}|\cdot\totalsteps)$ space.
If for all $c\in\mathcal{C}$ and $\sigma\in\Sigma$, there exists at most one $c'\in\mathcal{C}$ such that $\varphi(c,\sigma,c')>0$, a factor of $|\mathcal{C}|$ can be dropped from the runtime bounds.
\end{theorem}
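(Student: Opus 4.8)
The plan is to obtain the theorem as a straightforward composition of two earlier results, followed by a complexity calculation. First I would apply the DAA construction of Section~\ref{sec:construction} to the given algorithm $A$ and pattern $p$, using the supplied functions $\fshift^{A,p}$ and $\fcost^{A,p}$. This produces a DAA $\daa$ with state space $\stateset^\daa=\Sigma^m\times\{0,\dots,m\}$, and the theorem of that section guarantees $\daaval_\daa(s)=\cost{A,p}{s}$ for every $s\in\Sigma^\ast$. Since $X_\totalsteps^{A,p}$ is by definition the total number of character accesses incurred on a random text of length $\totalsteps$, we have the identity of random variables $X_\totalsteps^{A,p}=\daaval_\daa(S_0\dots S_{\totalsteps-1})$, and hence equality of their distributions.

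Second, I would feed $\daa$ and the text model $(\mathcal{C},c_0,\Sigma,\varphi)$ into Lemma~\ref{lem:daapaa}, obtaining a PAA whose value process satisfies $\dist(\valproc_t)=\dist\big(\daaval_\daa(S_0\dots S_{t-1})\big)$ for all $t$. Specializing to $t=\totalsteps$ and combining with the identity above gives $\dist(\valproc_\totalsteps)=\dist(X_\totalsteps^{A,p})$, so evaluating the PAA value distribution through the recurrence of Lemma~\ref{lem:paa} computes exactly the target distribution $\dist(X_\totalsteps^{A,p})$.

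Third comes the complexity bookkeeping. Lemma~\ref{lem:daapaa} already provides $O(\totalsteps\cdot|\stateset^\daa|\cdot|\mathcal{C}|^2\cdot|\Sigma|\cdot\valsize_\totalsteps)$ time and $O(|\stateset^\daa|\cdot|\mathcal{C}|\cdot\valsize_\totalsteps)$ space. The one nontrivial ingredient is the value-range bound noted in Section~\ref{sec:paa_def}: at most $\totalsteps-m+1$ windows are examined and each costs at most $m$ accesses, so $\valset_t\subset\{0,\dots,m(\totalsteps-m+1)\}$ and therefore $\valsize_\totalsteps\in O(\totalsteps\cdot m)$. Substituting this into the two expressions yields the first pair of bounds, $O(\totalsteps^2\cdot m\cdot|\stateset^\daa|\cdot|\mathcal{C}|^2\cdot|\Sigma|)$ time and $O(|\stateset^\daa|\cdot|\mathcal{C}|\cdot\totalsteps\cdot m)$ space. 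Plugging in $|\stateset^\daa|=|\Sigma|^m\cdot(m+1)=O(m\cdot\Sigma^m)$ and simplifying gives the pattern-explicit bounds $O(\totalsteps^2\cdot m^2\cdot\Sigma^{m+1}\cdot|\mathcal{C}|^2)$ time and $O(m^2\cdot\Sigma^m\cdot|\mathcal{C}|\cdot\totalsteps)$ space, while the deterministic-text-model improvement carries over verbatim from the final clause of Lemma~\ref{lem:daapaa}, which removes a factor of $|\mathcal{C}|$ from the runtime when each context-symbol pair has at most one successor context.

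Since every component is delivered by a previously established result, I expect no genuine obstacle; the only step requiring attention is the value-range estimate $\valsize_\totalsteps\in O(\totalsteps\cdot m)$. The generic bound $\valsize_\totalsteps\le(|\stateset|\cdot|\emiset|)^\totalsteps$ is exponential and would render the claim vacuous, so the entire complexity statement rests on this sharper linear estimate, which is precisely why it is singled out in Section~\ref{sec:paa_def}.
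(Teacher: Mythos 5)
Your proposal is correct and follows essentially the same route as the paper, which presents Theorem~\ref{thm:main} as a direct consequence of the Section~\ref{sec:construction} DAA construction combined with Lemma~\ref{lem:daapaa}, with the bounds obtained by substituting $\valsize_\totalsteps\in O(\totalsteps\cdot m)$ and $|\stateset^\daa|=|\Sigma|^m\cdot(m+1)$. Your explicit attention to the value-range estimate is a sensible elaboration of what the paper leaves implicit.
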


Applying DAA minimization before the PAA construction results in considerable speed-ups in practice. Alternatively, algorithm dependent construction schemes may be used to construct small automata. Tsai~\cite{Tsai2006}, for instance, gives algorithms to compute the asymptotic mean and variance of the number of comparisons used by Horspool's algorithm; for that, he constructs a Markov chain with $O(m^2)$ states. His construction can immediately be adapted to construct a DAA with $O(m^2)$ states.


\section{Comparing Algorithms with Difference DAAs}
\label{sec:comparison}

Computing the cost distribution for two algorithms allows us to compare their performance characteristics.
One natural question, however, cannot be answered by comparing these two (one-dimensional) distributions: 
What is the probability that algorithm $A$ needs more text accesses than algorithm $B$ to scan the same random text? 
The answer will depend on the correlation of algorithm performances: Do the same instances lead to long runtimes for both algorithms or are there instances that are easy for one algorithm but difficult for the other? 
This section answers these questions by constructing a PAA to compute the distribution of \emph{cost differences} of two algorithms. 
That means, we calculate the probability that algorithm $A$ needs $v$ text accesses \emph{more} than algorithm $B$ for all $v\in\setZ$.

We start by giving a general construction of a DAA that computes the difference of the sum of emission of two given DAAs.

\begin{definition}[Difference DAA]\label{def:diff-daa}
Let $\daa^1=\left(\stateset^1, \state_0^1, \Sigma, \delta^1, \valset^1, \val_0^1, \emiset^1, (\daaemi^1_\state)_{\state\in \stateset^1},(\op^1_\state)_{\state\in \stateset^1} \right)$ and $\daa^2=\left(\stateset^2, \state_0^2, \Sigma, \delta^2, \valset^2, \val_0^2, \emiset^2, (\daaemi^2_\state)_{\state\in \stateset^2},(\op^2_\state)_{\state\in \stateset^2}\right)$ be DAAs given over the same alphabet~$\Sigma$ with $\valset^1=\valset^2=\setN$, $\val_0^1=\val_0^2=0$, $\emiset^1,\emiset^2\subset\setN$, and all operations are additions of previous value and current emission. The \emph{difference DAA} is defined as
\[\diffdaa{\daa^1}{\daa^2}:=\left(\stateset, \state_0, \Sigma, \delta, \valset, \val_0, \emiset, (\daaemi_\state)_{\state\in \stateset},(\op_\state)_{\state\in \stateset} \right)\,\]
where
\begin{itemize}
\item $\stateset:=\stateset^1\times\stateset^2$ and $\state_0:=(\state_0^1, \state_0^2)$,
\item $\valset:=\setZ$ and $\val_0:=0$,
\item $\emiset:=\emiset^1\times\emiset^2$ and $\daaemi_{(\state^1,\state^2)}:=\left(\daaemi^1_{\state^1},\daaemi^2_{\state^2}\right)$,
\item $\delta:\big((\state^1,\state^2),\sigma\big)\mapsto\big(\delta^1(\state^1,\sigma), \delta^2(\state^2,\sigma)\big)$,
\item $\op_\state:\big(\val,(\emi^1,\emi^2)\big)\mapsto\val+\emi^1-\emi^2$
\end{itemize}

\end{definition}

\begin{lemma}\label{lem:diffdaa}
Let $\daa^1$ and $\daa^2$ be DAAs meeting the criteria given in Definition~\ref{def:diff-daa} and $\daa:=\diffdaa{\daa^1}{\daa^2}$. Then,
\[\daaval_\daa(s)=\daaval_{\daa^1}(s)-\daaval_{\daa^2}(s)\]
for all $s\in\Sigma^*$.
\end{lemma}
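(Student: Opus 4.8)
The plan is to prove the stronger joint statement by induction on the length of the input string $s$: for all $s\in\Sigma^*$, if $\hat{\delta}^1\big((\state_0^1,0),s\big)=(\state^1,\val^1)$ and $\hat{\delta}^2\big((\state_0^2,0),s\big)=(\state^2,\val^2)$ are the state-value pairs reached by $\daa^1$ and $\daa^2$ respectively, then the joint transition function $\hat{\delta}$ of $\daa:=\diffdaa{\daa^1}{\daa^2}$ satisfies $\hat{\delta}\big((\state_0,0),s\big)=\big((\state^1,\state^2),\val^1-\val^2\big)$. Evaluating this at $\val_0=0$ and reading off the value component then gives $\daaval_\daa(s)=\val^1-\val^2=\daaval_{\daa^1}(s)-\daaval_{\daa^2}(s)$, which is the claim. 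Tracking state components alongside the value is essential, since the emissions and operations of $\daa$ depend on the current state, so a value-only induction hypothesis would not be strong enough to carry the step.

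First I would check the base case $s=\emptystring$: by the inductive extension of $\hat{\delta}$ to $\Sigma^*$, all three automata sit in their start state-value pairs, so $\daa$ is in $\big((\state_0^1,\state_0^2),0\big)$ and $\daa^i$ in $(\state_0^i,0)$, giving $0=0-0$. For the inductive step I would write $s=x\sigma$ with $x\in\Sigma^*$ and $\sigma\in\Sigma$, apply the induction hypothesis to $x$ to obtain the three reached pairs after $x$, and then unfold one application of each joint transition function on reading $\sigma$. By the definition of $\delta$ in the difference DAA, the new state is $\big(\delta^1(\state^1,\sigma),\delta^2(\state^2,\sigma)\big)$, which matches the states reached by $\daa^1$ and $\daa^2$ individually; call these $\state'^1$ and $\state'^2$.

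The heart of the step is the value update. By definition of $\daaemi$ in the difference DAA, the emission of the new state $\state'=(\state'^1,\state'^2)$ is the pair $\big(\daaemi^1_{\state'^1},\daaemi^2_{\state'^2}\big)$, and the operation $\op_{\state'}$ maps $\big(\val,(\emi^1,\emi^2)\big)\mapsto \val+\emi^1-\emi^2$. Substituting the induction hypothesis $\val=\val^1-\val^2$ and the two emission components yields the new value
\[
(\val^1-\val^2)+\daaemi^1_{\state'^1}-\daaemi^2_{\state'^2}
=\big(\val^1+\daaemi^1_{\state'^1}\big)-\big(\val^2+\daaemi^2_{\state'^2}\big).
\]
Here I would invoke the hypotheses from Definition~\ref{def:diff-daa} that $\op^1$ and $\op^2$ are both ordinary addition of the current value and the new state's emission, so the two parenthesized quantities are exactly the updated values $\val'^1$ and $\val'^2$ of $\daa^1$ and $\daa^2$ after reading $\sigma$. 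Thus the difference DAA's value is $\val'^1-\val'^2$, completing the induction.

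I expect the only subtle point to be bookkeeping rather than any genuine obstacle: one must keep the state and value components threaded together so that "the emission of the new state" refers to the correct componentwise states $\state'^1,\state'^2$, and one must use precisely the assumption that both operations are addition — without it the clean factorization into $\val'^1-\val'^2$ would fail. The argument does not require $\valset^1=\valset^2=\setN$ or $\emiset^i\subset\setN$ beyond ensuring the subtraction lands in $\setZ=\valset$; those conditions merely guarantee the difference DAA is well-typed.
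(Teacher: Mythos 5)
Your proof is correct and is exactly the verification that the paper compresses into the single line ``Follows directly from Definition~\ref{def:diff-daa}'': a straightforward induction on $|s|$ tracking the paired state and value components, using the fact that both component operations are addition. No substantive difference in approach.
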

\begin{proof}
Follows directly from Definition~\ref{def:diff-daa}.
\end{proof}

Lemma~\ref{lem:diffdaa} can now be applied to the DAAs constructed for the analysis of two algorithms as described in Section~\ref{sec:construction}. Since the above construction builds the product of both state spaces, it is advisable to 
minimize both DAAs before generating the product. Furthermore, in an implementation, only reachable states of the product automaton need to be  constructed. Before being used to build a PAA (by applying Lemma~\ref{lem:daapaa}), the product DAA should again be minimized.

As discussed in Section~\ref{sec:paa_def}, at most $m(n-m+1)$ character accesses can result from scanning a text of length $n$ for a pattern of length $m$. Thus, the difference of costs for two algorithms lies in the range $\{-m(n-m+1),\ldots,m(n-m+1)\}$ and, hence, $\vartheta_n\in O(n\cdot m)$.


\section{Case Studies}
\label{sec:case_studies}

\begin{figure}[t!]
\begin{center}
\includegraphics[angle=-90,width=.5\textwidth]{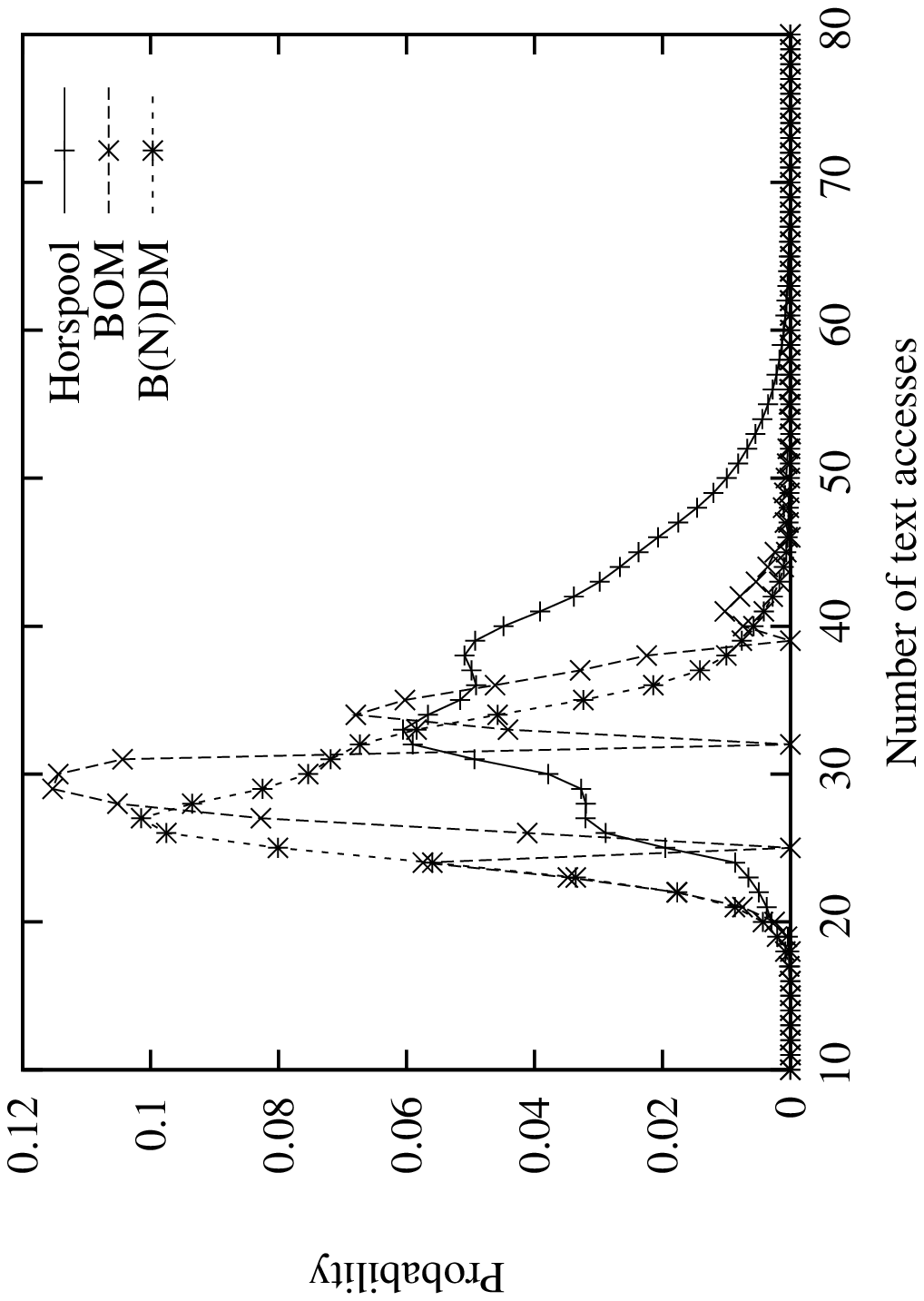}%
\includegraphics[angle=-90,width=.5\textwidth]{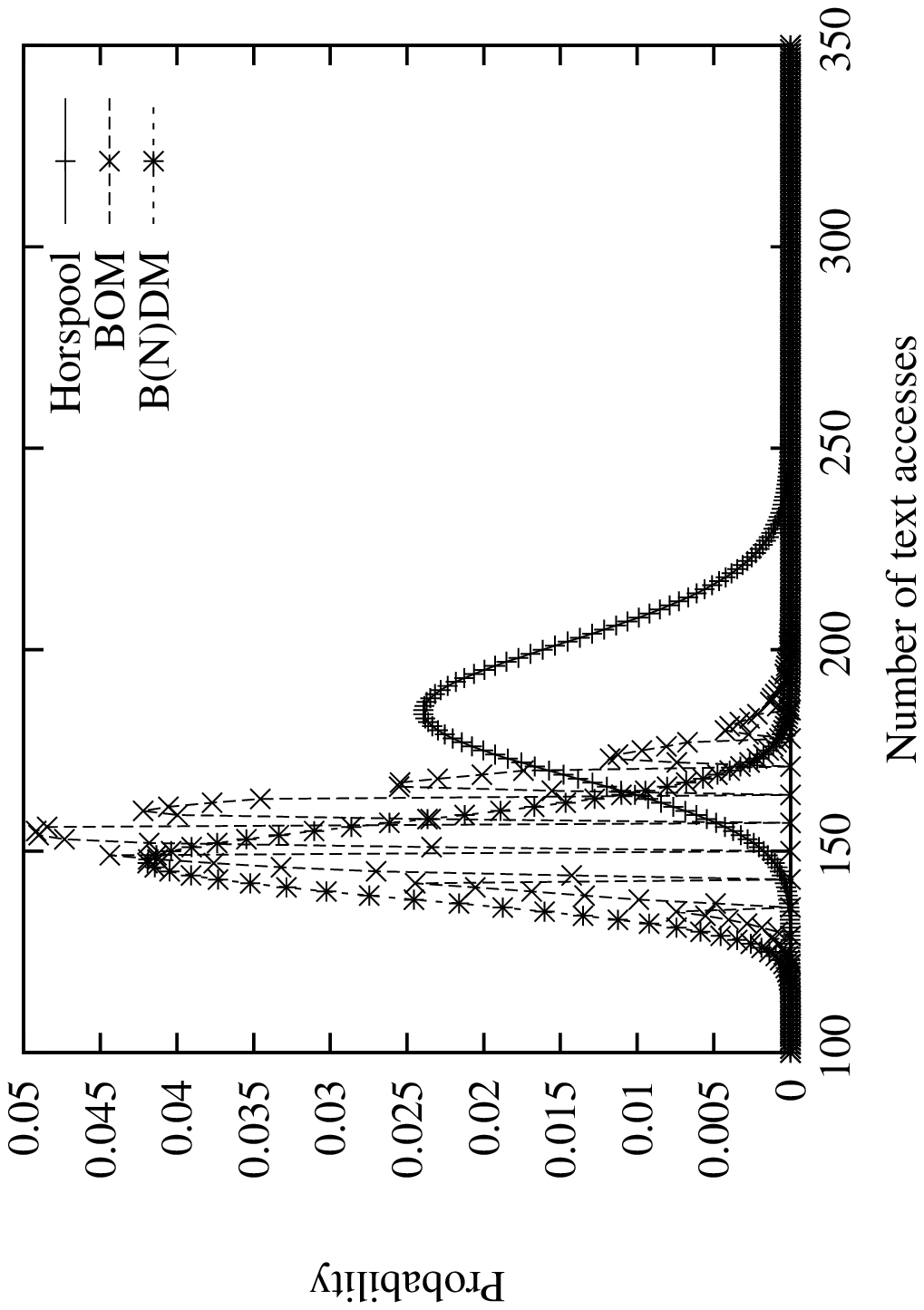}\\
\includegraphics[angle=-90,width=.5\textwidth]{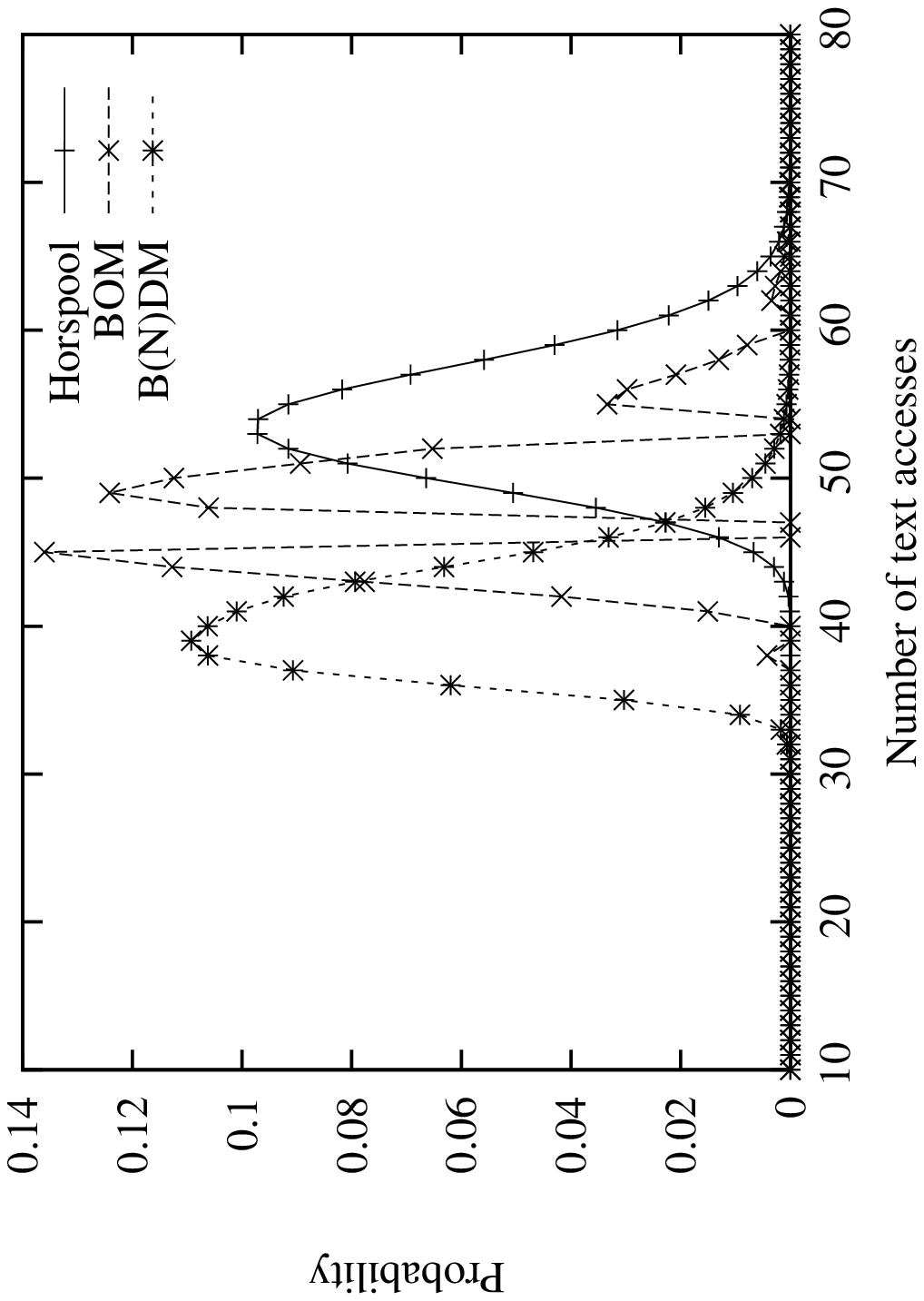}%
\includegraphics[angle=-90,width=.5\textwidth]{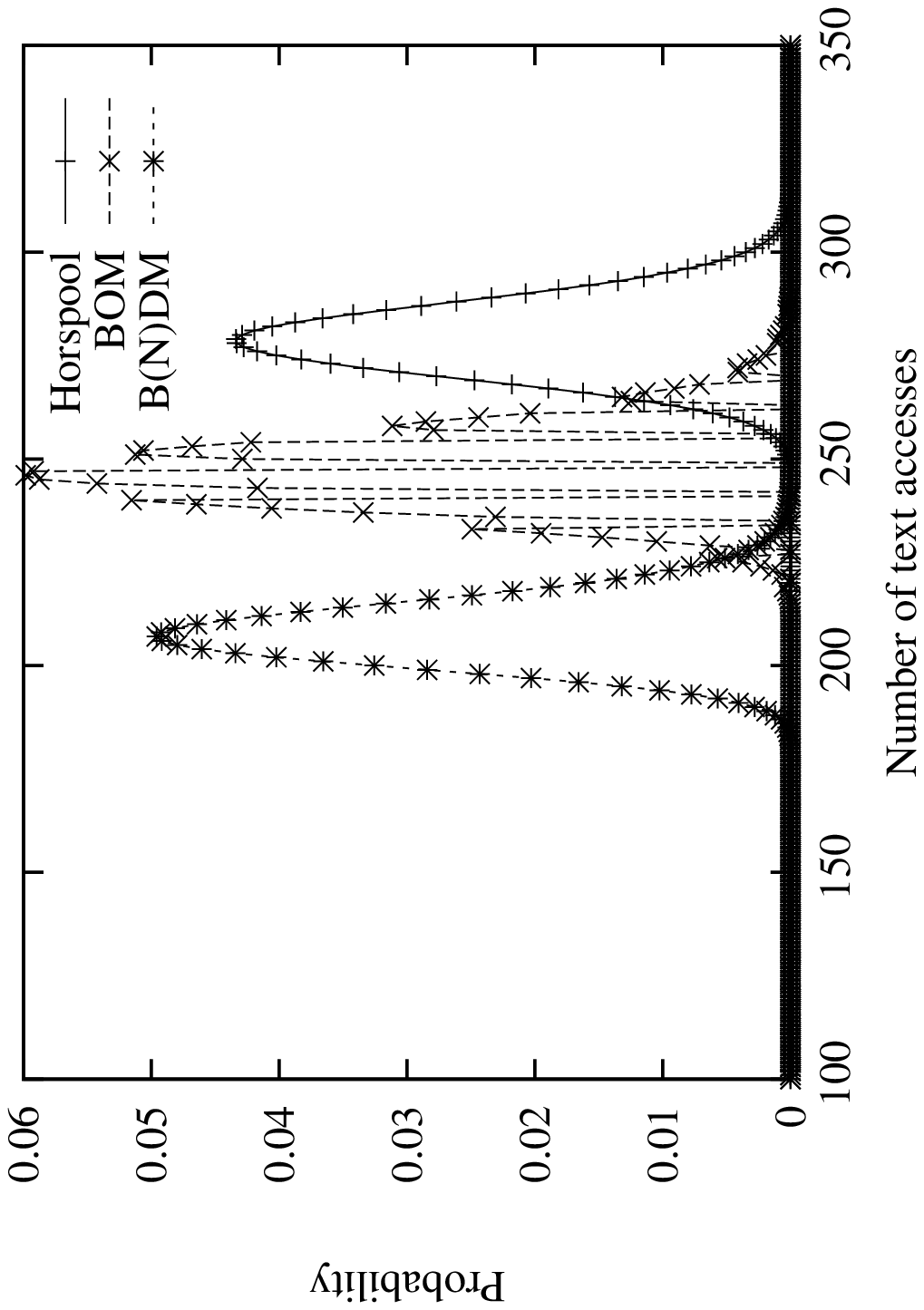}\\
\end{center}
\caption{Exact distributions of character access counts for patterns \texttt{ATATAT} (top) and \texttt{ACGTAC} (bottom) for text length 100 (left) and text length 500 (right). An i.i.d.\ text model with uniform character distribution is used.}\label{fig:dists}
\end{figure}

\begin{figure}[t!]
\begin{center}
\includegraphics[angle=-90,width=.5\textwidth]{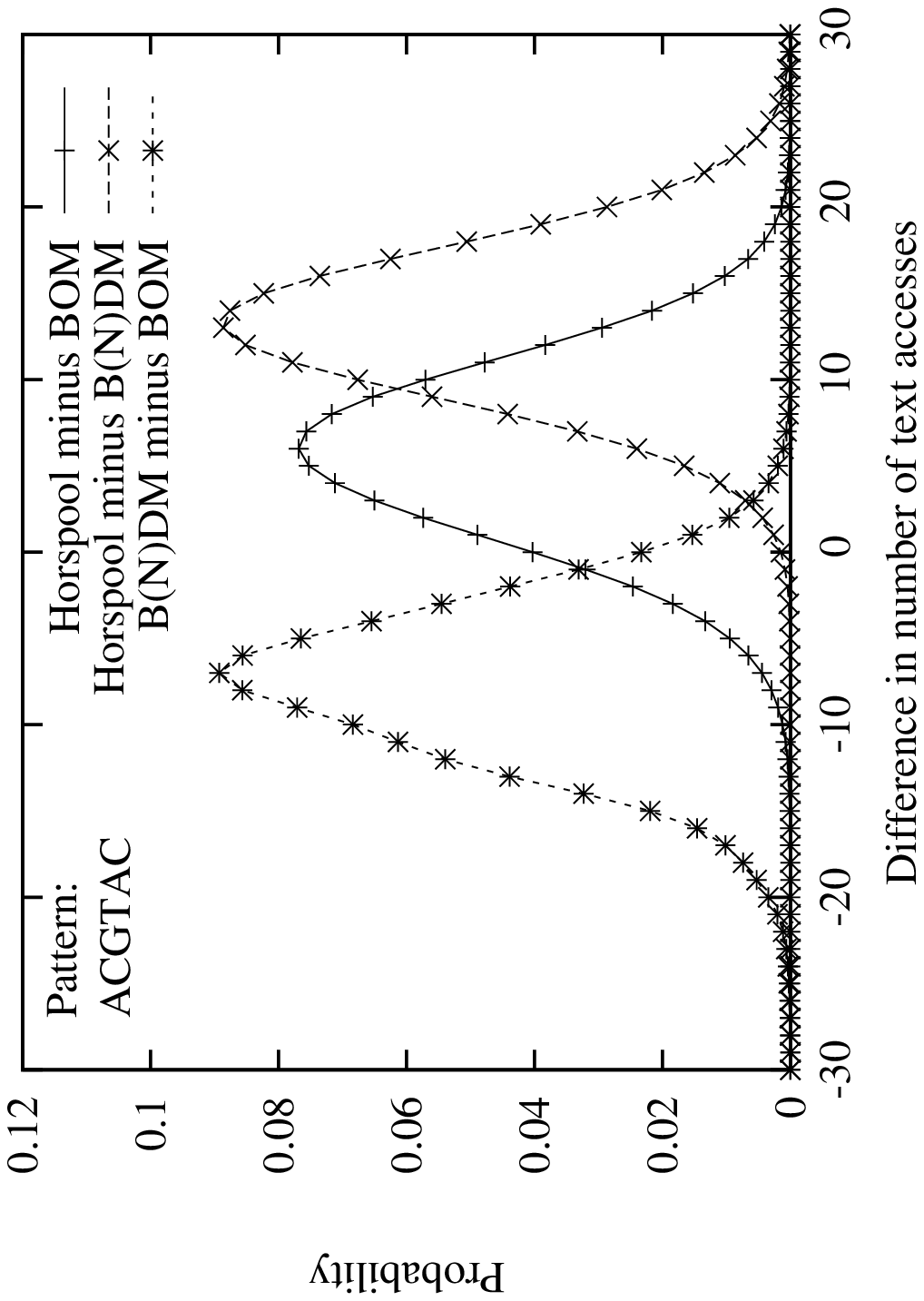}%
\includegraphics[angle=-90,width=.5\textwidth]{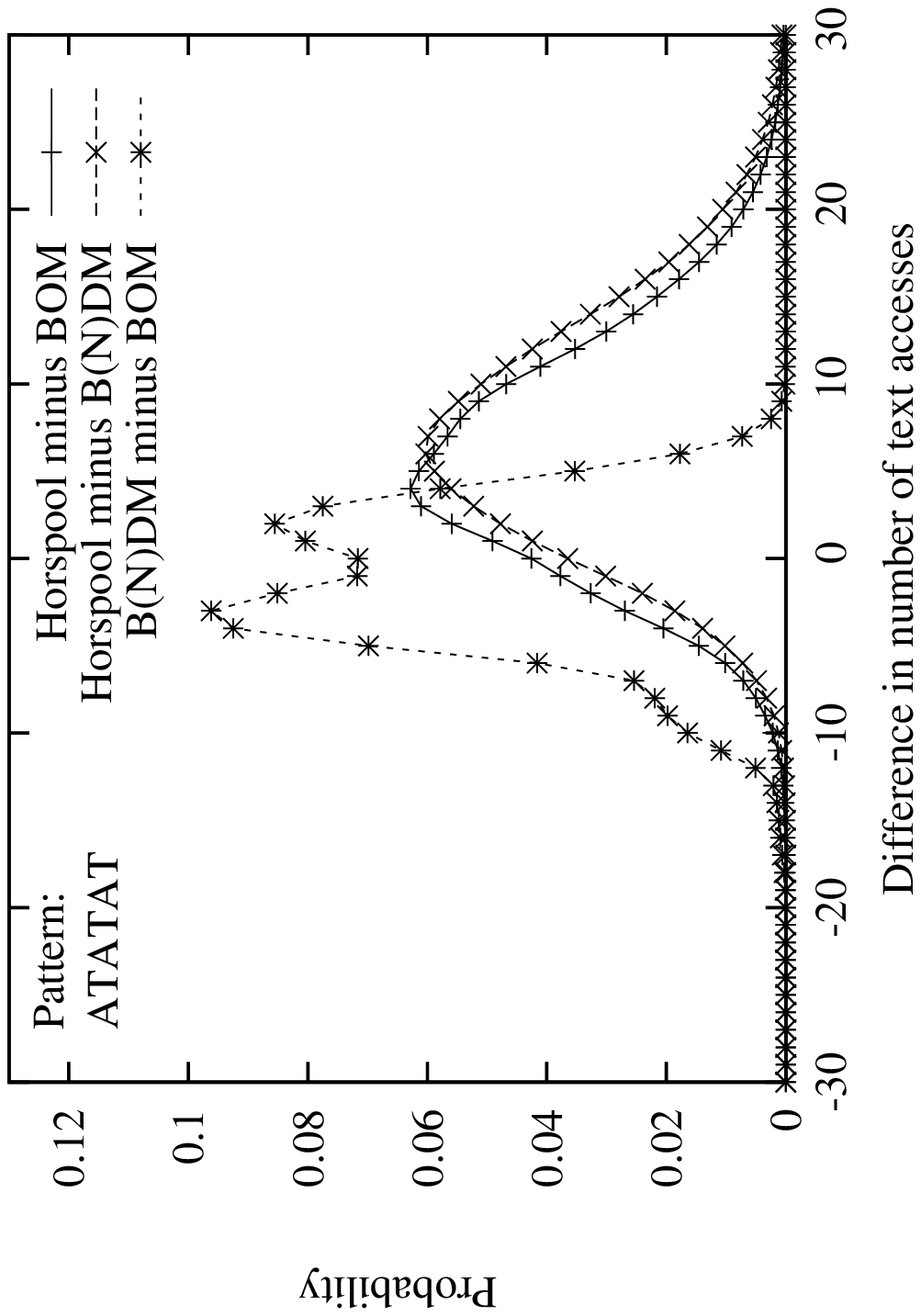}\\
\includegraphics[angle=-90,width=.5\textwidth]{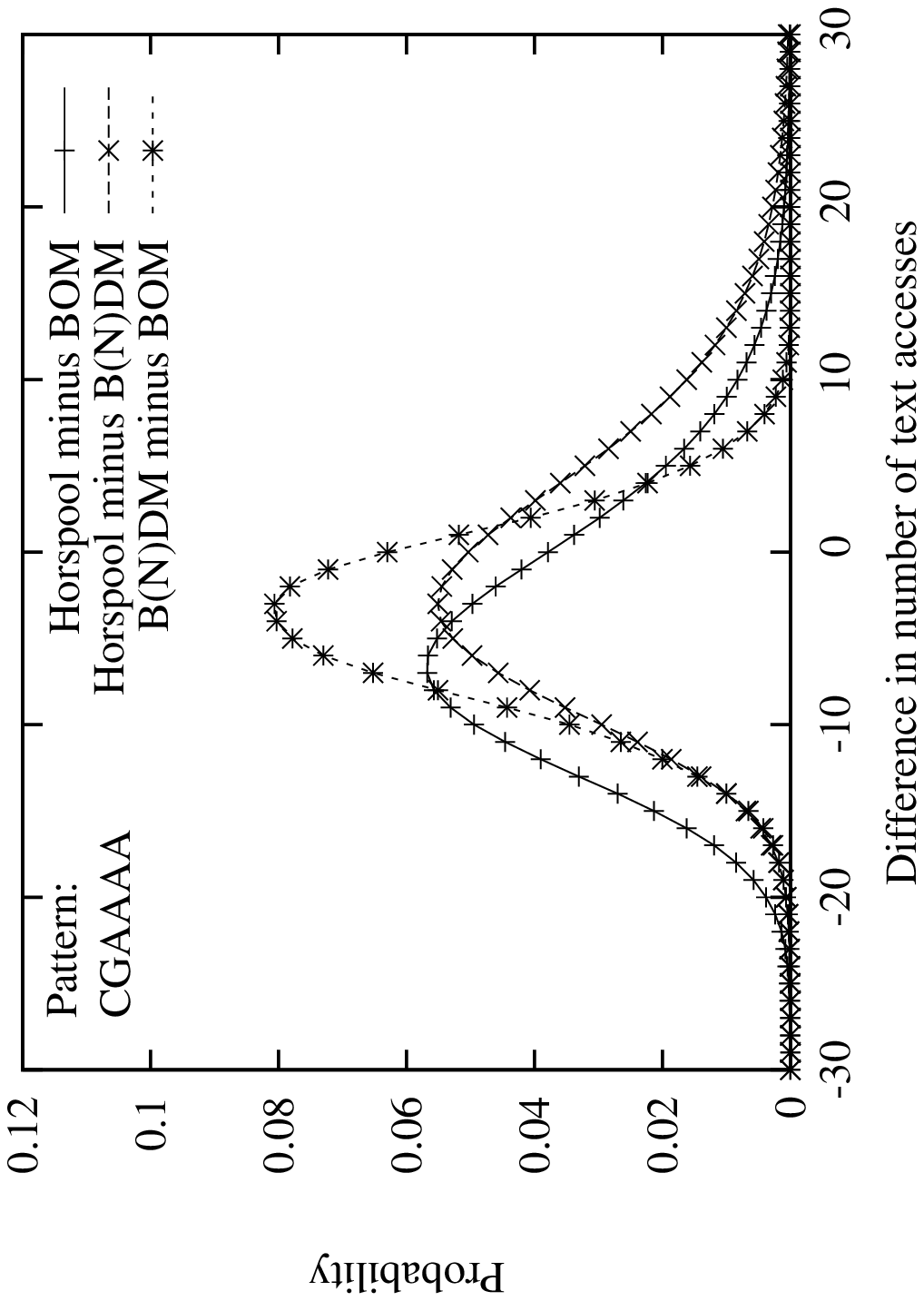}%
\includegraphics[angle=-90,width=.5\textwidth]{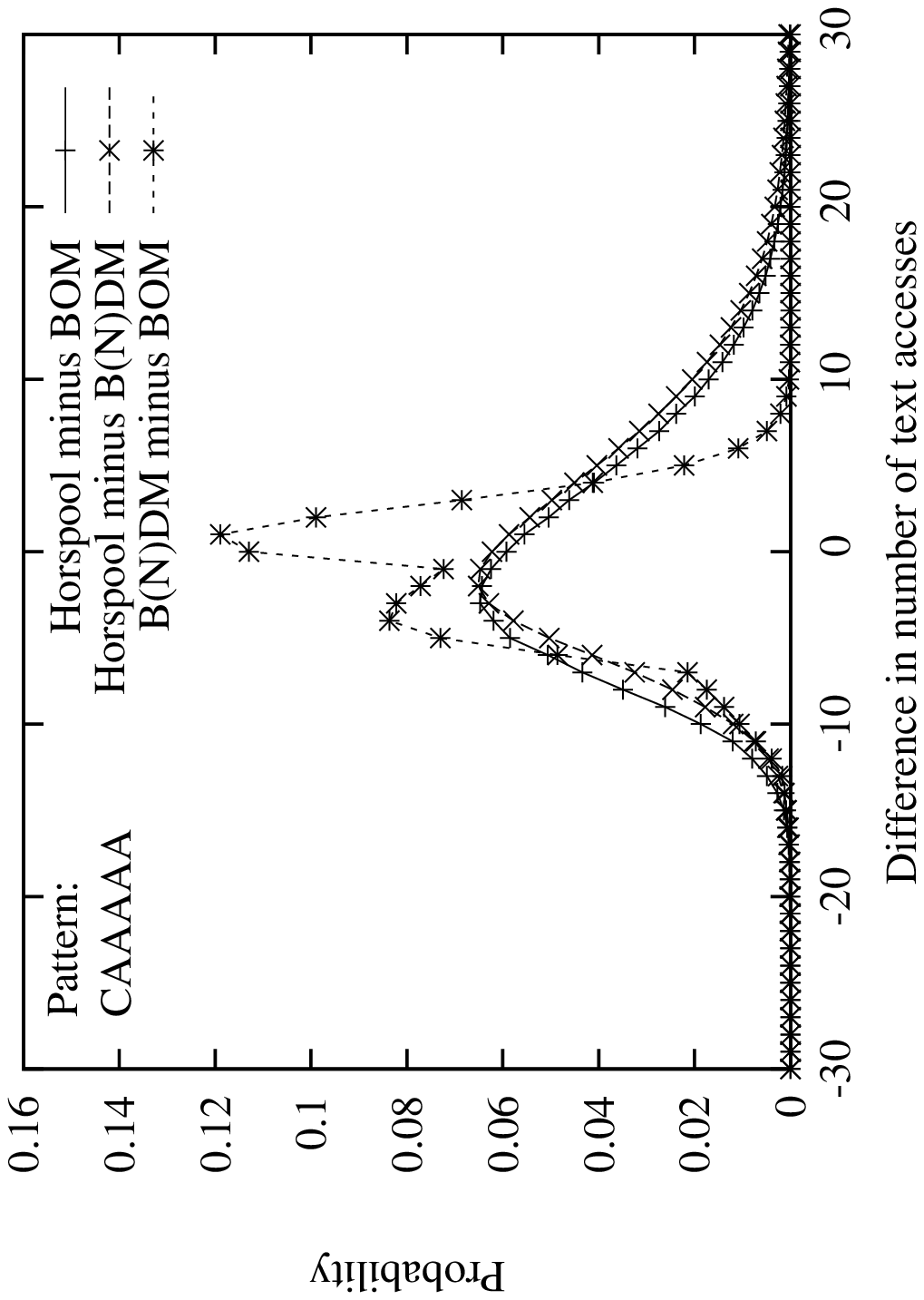}%
\end{center}
\caption{Exact distributions of differences in character access counts for different patterns using a uniform character distribution as text model and random texts of lengths 100.}\label{fig:cost_diffs}
\end{figure}

In Section~\ref{sec:algo}, we considered three practically relevant algorithms, namely Horspool's algorithm, backward oracle matching (BOM), and backward (non)-deterministic DAWG matching (B(N)DM). 
Now, we compare the distributions of running time costs of these algorithms for several patterns. 
Figure~\ref{fig:dists} shows these distributions for the patterns \texttt{ATATAT} and \texttt{ACGTAC} for text lengths 100 and 500 under a uniform i.i.d.\ model on the DNA alphabet $\{\texttt{A,C,G,T}\}$.
For text length 500, the distributions for Horspool and B(N)DM resemble the shape of normal distributions. 
In fact, for Horspool's algorithm it has been proven that the distribution is asymptotically normal \cite{Smythe2001}. For smaller text lengths (e.g.\ 100, as shown in left column of Figure~\ref{fig:dists}), the distributions are less smooth than for longer texts. It is remarkable that for BOM we find zero probabilities with a fixed period. In all examples shown Figure~\ref{fig:dists} this period equals 7.

The probability that one pattern matching algorithm is faster than another depends on the pattern. 
Using the technique introduced in Section~\ref{sec:comparison}, we can quantify the strength of this effect. 
Figure~\ref{fig:cost_diffs} shows distributions of cost \emph{differences} for different patterns and algorithms. 
That means, the probability that the first algorithm is faster is represented by the area under the curve left of zero. 
For the pattern \texttt{CGAAAA}, for example, there is a 55.6\% probability that Horspool's algorithm needs fewer character accesses than B(N)DM in uniform i.i.d.\ texts of length 100, while for \texttt{ACGTAC}, the probability is only 0.18\%. 

Worth noting and perhaps surprising is the fact that there is a non-zero probability of BOM being faster than B(N)DM altough, $\shift{\text{B(N)DM},p}{w} \geq \shift{\text{BOM},p}{w}$ for all window contents~$w$.
The explanation, of course, is that a shorter (say, first) shift for BOM leads to a different window content than for B(N)DM for the second window, which may have a larger shift value.
This effect depends on the pattern: 
For the pattern \texttt{CAAAAA}, there is a 48.2\% probability that BOM performs better than B(N)DM, while it is 6.2\% for \texttt{ACGTAC}, again on texts of length 100.

\begin{table}
\caption{Comparison of DAA sizes for all patterns of length $m$ over $\Sigma=\{\texttt{A},\texttt{C},\texttt{G},\texttt{T}\}$.}\label{tab:automata_sizes}
\begin{center}
\begin{tabular}{ccccc}
\hline
$m$ & States unminimized   & \multicolumn{3}{c}{States minimized (min./avg./max.)} \\
    & $|\Sigma|^m\cdot(m+1)$ &  Horspool & BOM & B(N)DM \\\hline
  2 & 48         & 4 / 4.8 / 5    &  4 / 4.0 / 4   &   4 / 4.8 / 5\\
  3 & 256        & 7 / 8.3 / 9    &  7 / 8.3 / 9   &   7 / 9.6 / 10\\
  4 & 1280       & 11 / 14.3 / 15 & 11 / 15.6 / 18 &  11 / 17.0 / 19\\
  5 & 6144       & 16 / 23.6 / 25 & 16 / 26.5 / 30 &  16 / 27.9 / 31\\
  6 & 28672      & 22 / 37.0 / 39 & 22 / 41.8 / 47 &  22 / 42.8 / 48\\
  7 & 131072     & 29 / 55.2 / 58 & 29 / 62.4 / 70 &  29 / 62.6 / 70\\\hline
\end{tabular}
\end{center}
\end{table}

\begin{figure}[t!]
\begin{center}
\includegraphics[angle=-90,width=\textwidth]{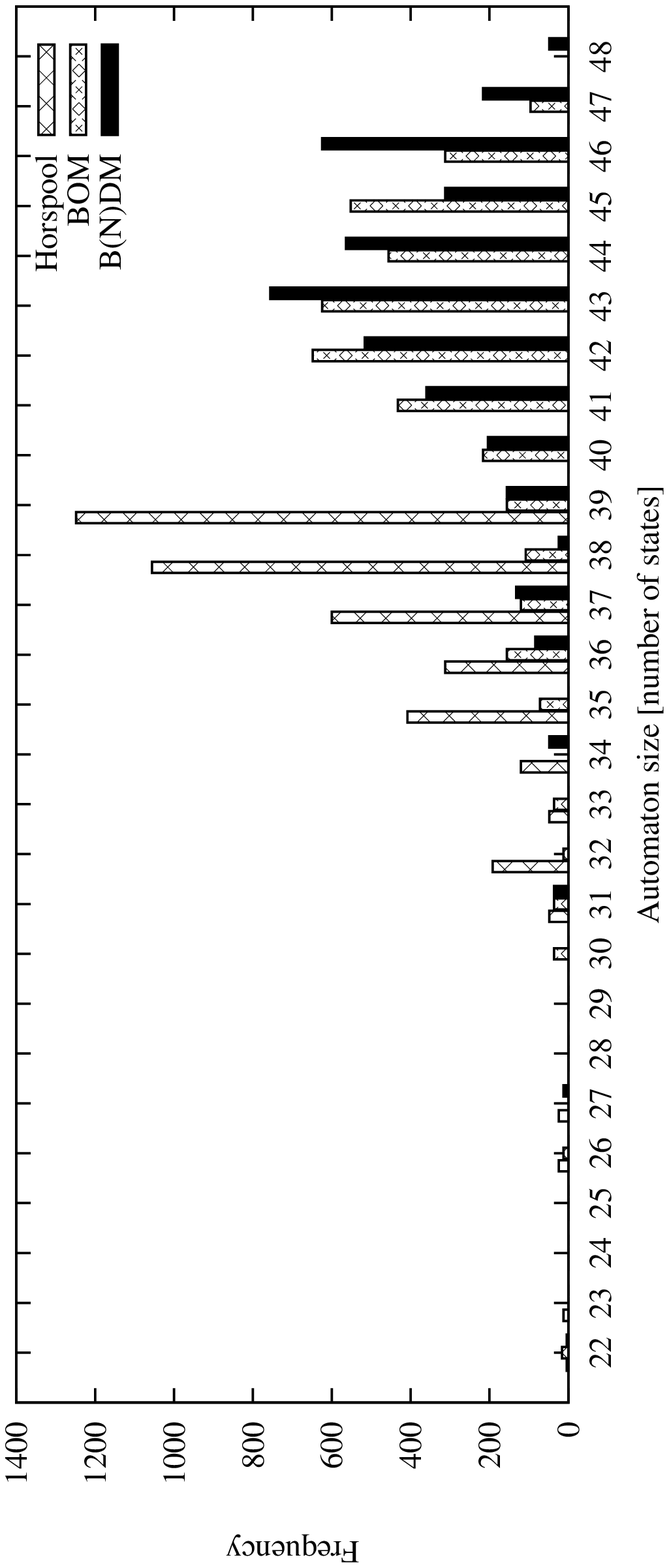}
\end{center}
\caption{Histogram on number of states of minimal DAAs over all patterns of length 6 over $\Sigma=\{\texttt{A},\texttt{C},\texttt{G},\texttt{T}\}$.}\label{fig:automata_sizes}
\end{figure}

To assess the effect of DAA minimization before constructing PAAs,
we constructed minimized DAAs for all 21840~patterns of lengths~2 to~7 over $\Sigma=\{\texttt{A},\texttt{C},\texttt{G},\texttt{T}\}$. 
The minimum, average, and maximum state counts are shown in Table~\ref{tab:automata_sizes}. 
For length 6, Figure~\ref{fig:automata_sizes} contains a detailed histogram. 
These statistics show that construction and minimization as given in this article lead to smaller automata (and thus better runtimes) than the constructions given in the conference version of this article~\cite{Marschall2010}.
It may be conjectured that that minimal state spaces grows only polynomial with~$m$ for all of these algorithms, as has been previously proven for the Horspool algorithm~\cite{Tsai2006}.

A JAVA implementation is available at \url{http://www.rahmannlab.de/software}.
All algorithms were run on an Intel Core 2 Quad CPU at 2.66GHz.
Computing the distributions shown in Figure~\ref{fig:dists} took 0.3 to 0.6 seconds for each distribution. 
Distributions of differences as in Figure~\ref{fig:cost_diffs} were computed in 14 to 36 seconds.


\section{Discussion}
\label{sec:discussion}

Using PAAs, we have shown how the exact distribution of the number of character accesses for window-based pattern matching algorithms can be computed. 
The framework is general enough to admit i.i.d.\ text models, Markovian text models of arbitrary order, and character-emitting hidden Markov models. 
The given construction results in an asymptotic runtime of $O(\totalsteps^2\cdot m\cdot |\stateset^\daa|\cdot|\mathcal{C}|^2\cdot |\Sigma|)$. 
The number of DAA states $\stateset^\daa$ is $O(m\cdot\Sigma^m)$, but it can be considerably reduced by DAA minimization. 
The resulting PAA is smaller and therefore computing the cost distribution is much faster. 
If the pattern length~$m$ is large, however, construction and minimization of the DAA itself pose a significant burden. 
It remains open if there exists an algorithm to construct the minimal automaton directly in general, i.e.\ using only $O(|\stateset^\daa_{\min}|)$ time. For Horspool's algorithm, an automaton with $O(m^2)$ states can be constructed by using the ideas from~\cite{Tsai2006}. The construction, however, cannot be easily transferred to other algorithms and we are not aware of any similar results for BOM or B(N)DM.

In this work, we considered the most practically relevant algorithms.
Exemplarily studying the cost distribution for some patterns showed that the algorithms performance is indeed non-negligibly influenced by the choice of the pattern.
Especially the behavior of BOM deserves further attention: Its distribution of text character accesses features periodic zero probabilities, and unexpectedly, it may need fewer text accesses than B(N)DM on some patterns, although BOM's shift values are never better than B(N)DM's.

We focused on algorithms for single patterns, but the presented techniques also apply to algorithms to search for multiple patterns like the Wu-Manber algorithm~\cite{Wu1994} or  set backward oracle matching and multiple BNDM as given in~\cite{Navarro2002}. A comparison of the resulting distributions could yield new insights into these algorithms as well.

Other metrics than text character accesses might be of interest and could be easily substituted; for example, just counting the number of windows by defining $\cost{p}{w}=1$ for all $w\in\Sigma^m$.

The given constructions allow us to analyze an algorithm's performance for each pattern individually. 
While this is desirable for detailed analysis, the cost distribution resulting from randomly choosing text \emph{and} pattern would also be of interest.


\bibliographystyle{abbrv}
\bibliography{analysis}

\end{document}